\newcommand{\ra}[1]{\renewcommand{\arraystretch}{#1}}
\newtheorem{theorem}{Theorem}[section]
\newtheorem{remark}{Remark}[section]
 \newtheorem*{theorem*}{Theorem}
\newtheorem{example}{Example}[section]
\def\1{\mbox{1\hspace{-.25em}{I}}}
\def\Ex{\mathbf{E}}
\def\R{\mathbf{R}}
\def\N{\mathcal{N}}
\def\U{\mathbb{U}}
\def\V{\mathrm{v}}
\def\E{\mathbb{E}}
\def\Pb{\mathbf{P}}
\def\H{\mathcal{H}}
\def\C{\mathbb{C}}
\title{Adjusted Win Ratio with Stratification: Calculation Methods and Interpretation}
\author{Samvel B. Gasparyan, 
	Folke Folkvaljon, 
	Olof Bengtsson, 	\and 
	Joan Buenconsejo\thanks{AstraZeneca R\&D. Corresponding author, Samvel B. Gasparyan, AstraZeneca R\&D, Pepparedsleden 1, 431 53 Mölndal, Sweden. E-mail: \href{mailto:samvel.gasparyan@astrazeneca.com}{samvel.gasparyan@astrazeneca.com}},
	Gary G. Koch\thanks{Department of Biostatistics, University of North Carolina, Chapel Hill, NC 27599-7420.}\\
	}
\date{\today}
\begin{document}

\maketitle

\begin{abstract}
The win ratio is a general method of comparing locations of distributions of two independent, ordinal random variables, and it can be estimated without distributional assumptions. In this paper we provide a unified theory of win ratio estimation in the presence of stratification and adjustment by a numeric variable. Building step by step on the estimate of the crude win ratio we compare corresponding tests with well known nonparametric tests of group difference (Wilcoxon rank-sum test, Fligner-Plicello test, Cochran-Mantel-Haenszel test, test based on the regression on ranks and the rank ANCOVA test). We show that the win ratio gives an interpretable treatment effect measure with corresponding test to detect treatment effect difference under minimal assumptions. 

\end{abstract}

\textbf {Keywords:} Win ratio, win probability, location test, stratification, adjustment, Wilcoxon test, Cochran-Mantel-Haenszel test, van Elteren test,  Fligner-Policello test, Hodges-Lehmann estimator, rank analysis, rank ANCOVA, estimand, intercurrent event, clinical trial, missing data, DAPA-HF, heart failure, KCCQ, PRO, symptom score, NNT.

\tableofcontents

\listoffigures
\listoftables

\newpage
\section{Introduction}
The win ratio as a measure for analyzing clinical endpoints was suggested in \cite{Poc2012}, \cite{Wang2016} to handle composite endpoints where the components are not clinically equivalent. For example, in heart failure (HF) trials the primary endpoint of interest is usually time to the composite of cardiovascular death (CVD) or a heart failure hospitalization (HFH), whichever happens first for an individual. By combining these two into one composite endpoint, we disregard the fact that a hospitalization for heart failure is clinically different from cardiovascular death. To overcome this issue an order is introduced between the components of the composite endpoint and it is analyzed as an ordinal variable. The win ratio intends to introduce an appropriate statistical approach to analyze such endpoints. The idea is to compare the outcomes from two distributions and assign the values ``win", ``loss" or ``tie" to these comparisons based on the value of the distribution of interest being correspondingly ``better", ``worse" or ``equal" to the value from the other distribution. The advantage of such approach is that in very general situations a comparison can be defined. For example, if subjects are followed an equal period of time until an HFH or a CVD happens, an order can be introduced by treating censoring as being better (in terms of benefit to patients) than HFH which in turn is better than CVD, while subjects experiencing an event of the same type can be compared using the time of the event (later is better). Hence two groups of patients receiving different treatments can be compared using this ordering, and the benefit of one treatment against the other can be estimated using the win ratio. Recent years saw more applications of win ratio in clinical trials as a part of prespecified testing hierarchy. To give two examples, the recently announced EMPULSE trial (registration number NCT04157751 in ClinicalTrials.gov) is a multicentre, randomized, double-blind, 90-day superiority trial in patients hospitalized for acute heart failure, where the primary endpoint is defined as a hierarchical composite of time to death, number of HF hospitalizations, time to first HFH and change in a KCCQ-CSS (clinical summary score of the Kansas City cardiomyopathy questionnaire) from baseline after 90 days of treatment (see Section \ref{WRA}). On the other hand, a large-scale CV (cardiovascular) outcome trial DAPA-HF (registration number NCT03036124 in ClinicalTrials.gov) in patients with heart failure with reduced ejection fraction (HFrEF) used the win ratio for analyzing patient reported symptoms scores as the third secondary endpoint. Application of the win ratio approach in the DAPA-HF trial will be the central topic of the Section \ref{DAPA}. 

In parallel, statistical methods for analyzing the win ratio started to gain more attention. In \cite{Poc2012} a confidence interval was constructed only for the so-called {\it matched win ratio}, which uses a restrictive definition of the win ratio. For the general definition of the win ratio, \cite{Wang2016} constructed a confidence interval using the bootstrap approach. \cite{Dong2016} gave an analytical approach for construction of a confidence interval and corresponding hypothesis testing using logarithmic asymptotic distribution of the win ratio. In subsequent papers, the authors provided generalization of the win ratio for the stratified analysis in \cite{Dong2018}, interpretation of the win ratio (including the definition of estimands for the win ratio as described in ICH E9 (R1) addendum on estimands) and handling of ties in \cite{Dong2019}. The papers \cite{Luo2015}, \cite{Luo2017} derive an alternative standard error estimate for the win ratio using counting process methods. Usually in the time-to-event setting the follow-up time and the outcome at the end of the follow up are used to define the ordering, as was initially described in \cite{Poc2012}, which means that the censoring is not used in the traditional sense of having incomplete observations. \cite{Oak2016}, following \cite{Efron1967}, introduces a win ratio estimate for the censored observations. In our setting, time will be fixed and will not be used in defining the order between the outcomes.

Almost all existing analytical solutions for the standard error estimation for the win ratio use the theory of U-statistics developed in the seminal paper \cite{Hoef1948}, and the relationship of the win ratio and the Mann-Whitney test statistic is apparent (see \cite{Bebu2015}). In this article we will further explore this relationship and will reformulate the results of the generalized Mann-Whitney statistic (stratified and adjusted for a numeric baseline covariate), developed extensively in the papers \cite{Davis1968}, \cite{Puri1971}, \cite{Land1978}, \cite{Koch1982}, \cite{Koch1998}, \cite{Kaw2011}, to account for this new change in concepts. 

The win ratio is defined as an odds of the win probability. First, the win probability is introduced as the theoretical probability of one, in general, ordinal random variable being greater than a second ordinal random variable under the condition that these random variables are independent. Several examples illustrate how this theoretical probability can be calculated if the underlying distributions of these random variables are known. Then a crude estimate of the win probability, called win proportion, is introduced, and a simulation shows the convergence of the win proportion to the win probability. Building step-by-step on the crude win proportion, continuous baseline covariate adjustment and stratification is introduced. In each step, the test based on the win probability is compared with well-known tests for group difference (Wilcoxon rank-sum test, Fligner-Plicello test, Cochran-Mantel-Haenszel test, test based on the regression on ranks and the rank ANCOVA test).

\paragraph{Outline}
Section \ref{I} gives the the definition of the win probability, its interpretation, the estimation and the construction of confidence intervals. In this section we also discuss the applied problem where the win ratio approach can be used.
Section \ref{II} generalizes the win proportion for the stratified analysis and adjustment with a numeric covariate.
Section \ref{III} compares the tests based on the win probability with other non-parametric tests.
Section \ref{DAPA} applies the theory to the analysis of symptoms scores in DAPA-HF landmark trial.

\section{Win probability (WP)}\label{I}

In this section we will define and investigate the properties of the non-adjusted (crude) win probability. Non-adjusted in our setting means that we observe only the response variables without predictors. In the following sections, the analysis time is fixed.

Suppose we have two groups of subjects receiving different treatments. The first group receives placebo, the second group receives an active treatment. At some prespecified timepoint a measurement  for the primary variable of interest is done and the following values are obtained
\begin{align}\label{Eq1}
Y_1=(y_{11},\cdots,y_{1n_1}),\ \ Y_2=(y_{21},\cdots,y_{2n_2}) 
\end{align}
where $n_2$ is the number of subjects in the active treatment group and $n_1$ is the number of subjects in the placebo group. We consider only the case when only a single measurement per subjects is done and there are no missing values. The measurement values are, in general, ordinal - they have a natural ordering, that is, the values can be compared, but, unlike the numeric values, there is no distance defined between values. We take the convention that higher values correspond to better outcome. We assume that $y_{2j},\ \ j=1,\cdots,n_2$ are an i.i.d. (independent and identically distributed) sample from the distribution of the random variable $\eta$ and $y_{1i},\ \ i=1,\cdots,n_1$ are an i.i.d sample from $\xi.$ We additionally require that $\eta$  and $\xi$ be independent.

\subsection{Definition and interpretation of WP} \label{WP}

To characterize the treatment effect of the active group in comparison to the placebo group we introduce the ``win probability" of the active treatment against the placebo as
\begin{align}\label{Eq2}
\Pb(\eta>\xi).
\end{align}
In this case $P(\eta>\xi)>\frac{1}{2}$ favors the active treatment, whereas  $P(\eta>\xi)<\frac{1}{2}$ favors placebo, with no treatment difference in the case of $P(\eta>\xi)=\frac{1}{2}.$ Our goal will be to test the hypothesis of whether there is a treatment effect difference between the active group and the placebo group based on the win probability. Before proceeding with the statistical analysis, we describe an example of how the win probability can be interpreted.   

\begin{example}\label{Ex1}
Suppose that $\xi$ has a uniform distribution $\xi\sim\U[0,a],\,a>0.$ The random variable $\eta$ is independent of $\xi$ and has a uniform distribution, shifted by a non-negative number $\delta\,(0\leq\delta\leq a),$ that is $\eta\sim\U[\delta,a+\delta].$ Then, it follows that
\begin{align*}
\Pb(\eta>\xi)=\frac{1}{2}+\frac{(2a-\delta)\delta}{2a^2}.
\end{align*}
If $\delta=0,$ meaning that the random variables have the same distribution $\U[0,a],$ then $\Pb(\eta>\xi)=\frac{1}{2}.$ Figure \ref{P1} below shows the probability density function (pdf) of the random variable $\eta-\xi$ in the case when $\delta=0.$ The probability $\Pb(\eta-\xi>0)$ is the area under the curve to the right of the origin. The pdf is symmetric, therefore the probability of the difference being positive is $\frac{1}{2}$. 

\begin{figure}[H]\caption{Difference of i.i.d. uniform random variables}
\includegraphics[width=10cm]{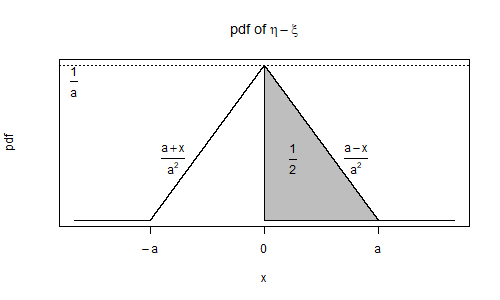}\label{P1}
\end{figure}

If $0<\delta<a,$ then the pdf of $\eta-\xi$ will be shifted to right by $\delta.$ The probability $\Pb(\eta-\xi>0)$ will be the grey area in Figure \ref{P2}, which is $\frac{1}{2}$ plus the area of the trapezoid over the interval $[0,\delta],$ calculated as

\begin{align*}
\Pb(\eta>\xi)=\frac{1}{2}+\frac{1}{2}\delta\left(\frac{a-\delta}{a^2}+\frac{1}{a}\right)=\frac{1}{2}+\frac{(2a-\delta)\delta}{2a^2}.
\end{align*}

\begin{figure}[H]\caption{Difference of uniform distributions - shifted}
\includegraphics[width=10cm]{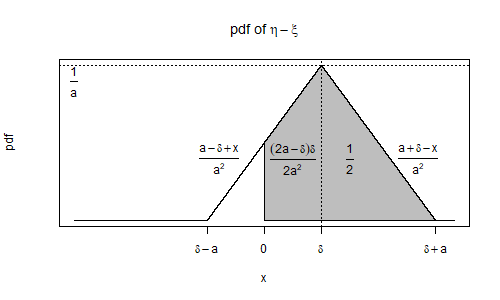}\label{P2}
\end{figure}
Remembering that $\delta=\Ex(\eta-\xi),$ we see that the difference between the win probability and $\frac{1}{2}$ is the difference in probability corresponding to the change from 0 of the mean difference of random variables.  So, the mean difference $\delta\in[0,a]$ of random variables corresponds to the following deviation of the win probability from $\frac{1}{2}$
\begin{align*}
\Pb(\eta>\xi)-\frac{1}{2}=\frac{(2a-\delta)\delta}{2a^2}.
\end{align*}
We see that there is a quadratic increase in the win probability, which will attain its maximal value for the shift $\delta=a.$ In the latter case $\Pb(\eta>\xi)=1,$ since the intervals where the densities are not 0 are completely separated and the density of $\eta$ is entirely to the right of the density of $\xi$. Hence the win probability gives a quantitative probabilistic interpretation to the mean difference. 
\end{example}

In Example \ref{Ex1} we had two identically distributed random variables, one of which had shifted mean value. The next example shows that the same interpretation is true for normally distributed random variables, even if the variance of random variables is different as well. 

\begin{example}\label{Ex2}
Suppose that $\xi\sim\N(m_1,\sigma_1^2)$ and $\eta\sim\N(m_2,\sigma_2^2)$ are independent, normally distributed random variables. Then $\eta-\xi\sim\N(m_2-m_1,\sigma^2_1+\sigma^2_2).$ The probability $\Pb(\eta>\xi)$ is the grey region in Figure \ref{P3} below (shown for the case $m_2>m_1$),

\begin{figure}[H]\caption{Difference of two normal distributions}
\includegraphics[width=10cm]{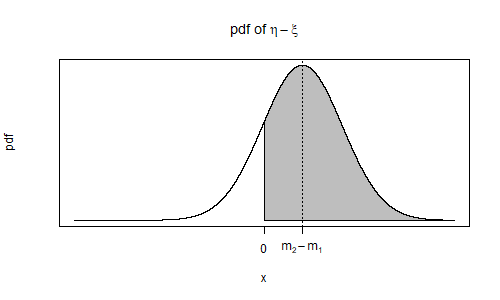}\label{P3}
\end{figure}
and it can be calculated using the formula
\begin{align}\label{Eq10}
\Pb(\eta>\xi)=\Phi\left(\frac{m_2-m_1}{\sqrt{\sigma_1^2+\sigma_2^2}}\right),
\end{align}
where $\Phi(\cdot)$ is the distribution function of the standard normal random variable. For example, in the case of  $\xi\sim\N(0,1)$ and $\eta\sim\N(1,1)$, we have $\Pb(\eta>\xi)=0.76,$ and when $\xi\sim\N(0,1)$ but $\eta\sim\N(1,4)$, then $\Pb(\eta>\xi)=0.67.$ The latter shows that the increase of variance in one of the random variables reduces the win probability.
\end{example}

\begin{remark}\label{R1} The random variable $\zeta$ is called symmetric if it has the same distribution as $-\zeta$. The median of the distribution of the random variable $\zeta$ is defined as any number $\mu$ satisfying the inequalities 
$$\Pb(\zeta\geq\mu)\geq\frac{1}{2}\text{ and }\Pb(\zeta\leq\mu)\geq\frac{1}{2}.$$
The following statements immediately follow from the definitions above.
\begin{enumerate}
\item If $\zeta$ has a continuous distribution and a unique median $\mu_0,$ then $$\Pb(\zeta>\mu_0)=\frac{1}{2}.$$
\item If $\zeta$ has a continuous distribution which is symmetric, then $$\Pb(\zeta>0)=\frac{1}{2}.$$
\item For all continuous, identically distributed random variables $\xi$ and $\eta$, the win probability is equal to $\Pb(\eta>\xi)=\frac{1}{2}.$ Indeed, since $\zeta=\eta-\xi$ is symmetric, then the win probability of both random variables will be $\frac{1}{2}.$
\end{enumerate}
\end{remark}

Using Remark \ref{R1}, Examples \ref{Ex1}, \ref{Ex2} can be generalized as follows.

\begin{example}\label{Ex7} Suppose that the independent random variables $\xi$ and $\eta$ have continuous distributions.
\begin{enumerate}
\item If there is a real number $\delta$ such that the random variable $$\zeta=\eta-\xi-\delta$$ is symmetric, then $$\Pb(\eta>\xi+\delta)=\frac{1}{2}.$$ 
Hence the equality of the win probability to $\frac{1}{2},$ that is, $\theta=\frac{1}{2}$ is the same as $\delta=0$. The inequality $\delta<0$ is equivalent to $\theta<\frac{1}{2}$ and $\delta>0$ is equivalent to $\theta>\frac{1}{2}.$ 

In particular, since $\Ex(\zeta)=0$ (because $\zeta$ is symmetric, see point 2 of Remark \ref{R1}), then we have $\delta=\Ex(\eta)-\Ex(\xi),$ hence $\theta=\frac{1}{2}$ is equivalent to equality of means of these random variables. Again, the win probability can be used also for the comparisons of means, namely if $\Ex(\eta)>\Ex(\xi)$ then $\Pb(\eta>\xi)>\frac{1}{2}$ and if $\Ex(\eta)<\Ex(\xi)$ then $\Pb(\eta>\xi)<\frac{1}{2}.$
\item Suppose that the random variable $\zeta=\eta-\xi$ has a unique median $\delta_0.$ Then, again $$\Pb(\eta>\xi+\delta_0)=\frac{1}{2}.$$ Hence, $\theta=\frac{1}{2}$ is the same as the median of the difference of these random variables being 0. A positive value of $\delta_0$ means the win probability is greater than $\frac{1}{2},$ and negative $\delta_0$ means the win probability is less than $\frac{1}{2}.$
\item Suppose that there exists a real number $\delta$ such that $\xi$ and $\eta-\delta$ have the same distribution function $F(\cdot).$ This means that the distribution function of $\eta$ differs from the distribution of $\xi$ only by a shift $\delta$ (like in Example \ref{Ex1}). From point 3 in Remark \ref{R1} we get
$$\Pb(\eta-\delta>\xi)=\frac{1}{2}.$$
Thus, as in the examples above, $\theta=\frac{1}{2}$ means there is no shift in distributions, whereas $\theta>\frac{1}{2}$ expresses a positive shift and $\theta<\frac{1}{2}$ expresses a negative shift.
\end{enumerate}
\end{example}

\begin{remark}\label{R7} Example \ref{Ex7} shows that the win probability, $\theta,$ expresses, in some sense, a comparison of locations of two distributions (like the mean difference, the median of the difference of distributions or, in the case of shifted distributions, the shift). While the mentioned location comparison statistics are relevant under some assumptions, the win probability can be defined in all cases, even when the random variables are ordinal (comparison of the random variables is defined, but the difference or sum is not). Also, from \eqref{Eq10} in Example \ref{Ex2}, we see that the value of the win probability can depend on the scale parameters of the distributions (the variances) as well, which shows that the win probability can contain more information about the comparison of two distributions than just the comparison of their locations. (In this case it contains information about the spread of the distributions as well). Thus, the win probability gives more complete information about closeness (equality) of distributions.
\end{remark}
So far we have considered only random variables with continuous distributions. By modifying the definition \eqref{Eq2} of a win probability we can have a general definition of a win probability for all random variables.
\begin{example}\label{Ex3}
Consider the case of discrete random variables. Suppose that the variable $\xi$ is constant and $\xi=1,$ whereas the random variable $\eta$ takes the values  $\eta\in\{1,2\}$ with corresponding probabilities $p_{1}=0.8,\,p_{2}=0.2.$ Then
\begin{align*}
\Pb(\eta>\xi)=\Pb(\eta=2)=0.2.
\end{align*}
So the win probability of the random variable $\eta$ is less than $\frac{1}{2}.$ But if we consider the mean values of these variables we see that $\Ex(\xi)=1$ and $\Ex(\eta)=1.2.$ Clearly, the comparison of the win probability with $\frac{1}{2}$ does not reflect the difference in the mean values. The reason is the presence of ties, in other words the probability of the event $\{\xi=\eta\}$ is positive. 
\end{example}
To have consistency with the case of continuous random variables, where $\Pb(\eta>\xi)>\frac{1}{2}$ reflected the fact of $\eta$ being better than $\xi$, we will redefine the win probability as 
\begin{align}\label{Eq4}
\theta=\Pb(\eta>\xi)+0.5\Pb(\eta=\xi).
\end{align}
In Example \ref{Ex3} this redefined win probability would be
\begin{align*}
\Pb(\eta>\xi)+0.5\Pb(\eta=\xi)=\Pb(\eta=2)+0.5\Pb(\eta=1)=0.6.
\end{align*}
The important property in Remark \ref{R1} of identically distributed random variables having the win probability equal to $\frac{1}{2}$ can be extended to the case of non-continuous random variables as well. Denote $\zeta=\eta-\xi,$ which again will be symmetric, that is, $\Pb(\zeta<x)=\Pb(-\zeta<x)$ for all real values $x$. Hence from the equality $\Pb(\zeta>0)+\Pb(\zeta<0)+\Pb(\zeta=0)=1$ we get $2\Pb(\zeta>0)+\Pb(\zeta=0)=1$, and so $\Pb(\eta>\xi)+0.5\Pb(\eta=\xi)=\frac{1}{2}$. Therefore $\theta=\frac{1}{2}$ and henceforth we will use the general definition \eqref{Eq4} of the win probability.

\begin{remark}(Number Needed to Treat)\label{R16} Consider the case when the independent random variables $\eta$ and $\xi$ are Bernoulli random variables with the probability of success being, correspondingly, $p$ and $q$. Then, the win probability of $\eta$ against $\xi$ would be
\begin{align*}
\theta&=\Pb(\eta=1,\xi=0)+\frac{1}{2}(\Pb(\eta=0,\xi=0)+\Pb(\eta=1,\xi=1))=\\
&=p(1-q)+\frac{1}{2}(pq+(1-p)(1-q))=\frac{p-q}{2}+\frac{1}{2}.
\end{align*}
Sometimes to characterize the benefit of an active treatment over a control an NNT (number needed to treat) is calculated as the inverse of the absolute benefit of intervention (see, for example, \cite{Chat1996})
\begin{align*}
NNT = \frac{1}{p-q}.
\end{align*}
Therefore, the NNT can be calculated using the win probability as follows
\begin{align*}
NNT = \frac{1}{2\theta-1}.
\end{align*}
This formula can be used in a more general setting as well when $\eta$ and $\xi$ are any two ordinal random variables. To calculate the estimated NNT we need to replace the win probability with its estimate.
\end{remark}

\subsection{WP estimation} \label{WPE}
Consider the estimation problem of the win probability \eqref{Eq4} for independent, in general ordinal, random variables $\eta$ and $\xi$ using the samples \eqref{Eq1}. The events $\{ \xi<\eta\}$ and $\{ \xi=\eta\}$ are defined on the set of all possible values $(\xi,\eta).$ Hence, to estimate the probability \eqref{Eq4} we can use the estimator
\begin{align}\label{Eq5}
\hat\theta_N=\frac{1}{n_1n_2}\sum_{i=1}^{n_1}\sum_{j=1}^{n_2}(\1\{y_{2j}>y_{1i}\}+0.5\1\{y_{1i}=y_{2j}\}).
\end{align}
Here $\1$ is an indicator taking the value 1 if the corresponding specification is satisfied, or 0 otherwise, and $N=(n_1+n_2)$. There are $n_1n_2$ possibilities of comparing a component of $Y_2$ to a component of $Y_1.$ For each comparison we can have three results - a ``win" for the active group if $y_{2j}>y_{1i}$, a ``loss" if $y_{2j}<y_{1i}$ or a ``tie" if $y_{1i}=y_{2j}$. The estimator \eqref{Eq5} counts the number of wins and one half of the number of ties over all possible combinations. The statistic $n_1n_2\hat\theta_N$ is known as the {\it Mann-Whitney statistic}. The estimator $\hat\theta_N$ is a simple frequency estimator to estimate the probability of success in a trinomial trial. By {\it the law of large numbers} this estimator tends to the win probability $\theta,$ when $N\rightarrow+\infty.$  We call the estimator \eqref{Eq5} {\it the win proportion} of the active group against the placebo group. Modifying the win proportion we can write
\begin{align}\label{Eq8}
\hat\theta_N=\frac{1}{n_2}\sum_{j=1}^{n_2}\frac{1}{n_1}\sum_{i=1}^{n_1}(\1\{y_{2j}>y_{1i}\}+0.5\1\{y_{1i}=y_{2j}\})=\frac{1}{n_2}\sum_{j=1}^{n_2}p_{j},
\end{align}
where
\begin{align}\label{Eq6}
p_j=\frac{1}{n_1}\sum_{i=1}^{n_1}(\1\{y_{2j}>y_{1i}\}+0.5\1\{y_{1i}=y_{2j}\}),\ \ j=1,\cdots,n_2.
\end{align}
We call these quantities {\it the individual win proportions} of subject $j$ in the active group against the placebo group. In the same way, we can define the win proportion of an individual $i$ in the placebo group against the active group as
\begin{align}\label{Eq15}
q_i=\frac{1}{n_2}\sum_{j=1}^{n_2}(\1\{y_{2j}<y_{1i}\}+0.5\1\{y_{1i}=y_{2j}\}),\ \ i=1,\cdots,n_1.
\end{align}
It is easy to see that
\begin{align}\label{Eq9}
\hat\theta_N=\frac{1}{n_2}\sum_{j=1}^{n_2}p_{j}=1-\frac{1}{n_1}\sum_{i=1}^{n_1}q_{i}.
\end{align}
Thus, the mean of the $p_j$ in \eqref{Eq6} and the mean of the $q_i$ in \eqref{Eq15}, respectively estimate the probabilities $\theta$ and $(1-\theta)$ (see \eqref{Eq4}). Therefore the samples \eqref{Eq1} of independent, in general ordinal random variables, can be replaced by numeric samples  (see Appendix V in \cite{Koch1998})
\begin{align}\label{Eq7}
Y_2^0=(p_1,p_2,\cdots,p_{n_2}), \ \ Y_1^0=(q_1,q_2,\cdots,q_{n_1}).
\end{align}
The following theorem provides the asymptotic normality of the win proportion as an estimator for the win probability.
\begin{theorem}\label{T8}
If $\frac{n_1}{N}\rightarrow\lambda\in(0,1),$ as $N\rightarrow+\infty,$ and independent random variables $\xi$ and $\eta$ have continuous distributions, then
\begin{align*}
\sqrt{N}(\hat\theta_N-\theta)\Longrightarrow\N\left(0,\frac{\sigma_{10}^2}{\lambda}+\frac{\sigma_{01}^2}{1-\lambda}\right),\ \ \text{as }N\rightarrow+\infty,
\end{align*}
where 
\begin{align*}
&\sigma_{01}^2=\C ov(\1(\xi<\eta),\1(\xi'<\eta))=\Pb(\xi<\eta,\xi'<\eta)-\Pb(\xi<\eta)^2,\\
&\sigma_{10}^2=\C ov(\1(\xi<\eta),\1(\xi<\eta'))=\Pb(\xi<\eta,\xi<\eta')-\Pb(\xi<\eta)^2.\\
\end{align*}
Here $\xi,\xi',\eta,\eta'$ are independent. $\xi,\xi'$ have the same distribution and $\eta,\eta'$ have the same distribution. 
\end{theorem}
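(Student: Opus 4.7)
The plan is to recognize $\hat\theta_N$ as a two-sample U-statistic of degree $(1,1)$ with kernel $h(x,y)=\1\{x<y\}+0.5\cdot\1\{x=y\}$ and then apply Hoeffding's projection method. Because $\xi$ and $\eta$ are continuous, $\Pb(\xi=\eta)=0$, so almost surely all the tie indicators in \eqref{Eq5} vanish and one may work with the bounded kernel $h(x,y)=\1\{x<y\}$, for which $\E h(\xi,\eta)=\theta$ and $|h|\le 1$.

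Next I would write the Hoeffding H-decomposition
\[
\hat\theta_N-\theta \;=\; \frac{1}{n_1}\sum_{i=1}^{n_1}\psi_1(y_{1i}) \;+\; \frac{1}{n_2}\sum_{j=1}^{n_2}\psi_2(y_{2j}) \;+\; R_N,
\]
with projections $\psi_1(x)=\E h(x,\eta)-\theta=\Pb(x<\eta)-\theta$ and $\psi_2(y)=\E h(\xi,y)-\theta=\Pb(\xi<y)-\theta$, and $R_N$ the completely degenerate second-order remainder. A short conditioning computation then identifies the projection variances with the quantities in the statement: conditioning on $\xi$ gives $\text{Var}(\psi_1(\xi))=\E[\Pb(\xi<\eta\mid\xi)^2]-\theta^2=\sigma_{10}^2$, and conditioning on $\eta$ gives $\text{Var}(\psi_2(\eta))=\E[\Pb(\xi<\eta\mid\eta)^2]-\theta^2=\sigma_{01}^2$.

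Since the two linear sums are independent sums of i.i.d., centered, bounded variables, the classical Lindeberg CLT applied separately, together with $n_1/N\to\lambda\in(0,1)$ and Slutsky's theorem, yields
\[
\sqrt{N}\!\left(\frac{1}{n_1}\sum_i\psi_1(y_{1i})+\frac{1}{n_2}\sum_j\psi_2(y_{2j})\right)\Longrightarrow \N\!\left(0,\ \frac{\sigma_{10}^2}{\lambda}+\frac{\sigma_{01}^2}{1-\lambda}\right).
\]
It remains to absorb $R_N$: being a centered, completely degenerate two-sample U-statistic with bounded kernel, a direct second-moment calculation (the $\psi_1$ and $\psi_2$ contributions cancel by the projection construction) gives $\E R_N^2 = O(1/(n_1 n_2)) = O(1/N^2)$, so $\sqrt{N}R_N\to 0$ in $L^2$ and Slutsky concludes.

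The main obstacle, apart from bookkeeping, is matching the Hoeffding projection variances to the particular $\sigma_{10}^2$ and $\sigma_{01}^2$ stated in the theorem; this reduces to a one-line conditional-expectation argument once one writes $\sigma_{10}^2=\C ov(\1\{\xi<\eta\},\1\{\xi<\eta'\})=\text{Var}(\E[\1\{\xi<\eta\}\mid\xi])$, and symmetrically for $\sigma_{01}^2$. Boundedness of $h$ makes the tightness of the remainder trivial, so nothing beyond standard U-statistic machinery is needed.
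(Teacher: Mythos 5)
Your argument is correct and is exactly the standard two-sample U-statistic proof that the paper itself defers to by citing Hoeffding (1948) and Theorem 12.6 of van der Vaart (2000): Hoeffding/H\'ajek projection, CLT for the two independent linear parts with the $N/n_1\to1/\lambda$ and $N/n_2\to1/(1-\lambda)$ scalings, and an $L^2$ bound showing the degenerate remainder is $O(1/N^2)$ in second moment. The identification of the projection variances with $\sigma_{10}^2$ and $\sigma_{01}^2$ via conditioning on $\xi$ and on $\eta$ is also right, so nothing further is needed.
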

The proof of Theorem \ref{T8} is based on the theory of U-statistics developed in \cite{Hoef1948}. See also the Theorem 12.6 in \cite{VdV2000}. For generalizations see \cite{Puri1971}.

The following theorem from \cite{Koch1998} gives estimates for the variances $\sigma_{10}^2,\sigma_{01}^2$. 
\begin{theorem}\label{T1}
Suppose that the random variables $\xi$ and $\eta$ are independent and not constant. The estimator $\hat\theta_N$ (see \eqref{Eq5}) of the win probability \eqref{Eq4} is asymptotically normal with
\begin{align*}
\frac{\hat\theta_N-\theta}{\sqrt{\frac{\V ar(Y_2^0)}{n_2}+\frac{\V ar(Y_1^0)}{n_1}}}\Longrightarrow\N(0,1),\ \ \text{as }n_1\rightarrow+\infty,\,n_2\rightarrow+\infty,
\end{align*}
where $\V ar(Y^2_0)$ and $\V ar(Y^1_0)$ denote the estimates of the variances
\begin{align*}
\V ar(Y_2^0)=\frac{1}{n_2-1}\sum_{j=1}^{n_2}(p_j-\hat\theta_N)^2,\ \ \V ar(Y_1^0)=\frac{1}{n_1-1}\sum_{i=1}^{n_1}(q_i-(1-\hat\theta_N))^2.
\end{align*}
\end{theorem}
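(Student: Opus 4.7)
The plan is to derive Theorem \ref{T1} from Theorem \ref{T8} via Slutsky's theorem. Theorem \ref{T8} already delivers $\sqrt{N}(\hat\theta_N - \theta) \Longrightarrow \N\bigl(0,\sigma_{10}^2/\lambda + \sigma_{01}^2/(1-\lambda)\bigr)$, so it suffices to show that the plug-in variance estimator satisfies
$$
N\Bigl(\frac{\V ar(Y_2^0)}{n_2} + \frac{\V ar(Y_1^0)}{n_1}\Bigr) \longrightarrow \frac{\sigma_{10}^2}{\lambda} + \frac{\sigma_{01}^2}{1-\lambda}
$$
in probability. Since $n_1/N \to \lambda$ and $n_2/N \to 1-\lambda$, this reduces to the two consistency statements $\V ar(Y_2^0) \to \sigma_{01}^2$ and $\V ar(Y_1^0) \to \sigma_{10}^2$ in probability. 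I treat the first; the second is symmetric.

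Write the kernel $h(x,y) = \1\{x<y\} + 0.5\,\1\{x=y\}$ and its marginal in the second argument $h_2(y) := \Ex[h(\xi,y)] = \Pb(\xi<y) + 0.5\,\Pb(\xi=y)$. Directly from the formulas in Theorem \ref{T8} one checks $\Ex[h_2(\eta)] = \theta$ and $\Ex[h_2(\eta)^2] = \Pb(\xi<\eta,\xi'<\eta) + (\text{tie corrections})$, so that the variance of $h_2(\eta)$ equals $\sigma_{01}^2$. Decompose each individual win proportion as
$$
p_j = h_2(y_{2j}) + \epsilon_j, \qquad \epsilon_j = \frac{1}{n_1}\sum_{i=1}^{n_1}\bigl[h(y_{1i},y_{2j}) - h_2(y_{2j})\bigr].
$$
Conditional on $y_{2j}$, $\epsilon_j$ is the centred average of $n_1$ bounded i.i.d.\ variables, so $\Ex[\epsilon_j \mid y_{2j}]=0$ and $\Ex[\epsilon_j^2] = O(1/n_1)$. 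Plugging this decomposition into $\V ar(Y_2^0)$ and expanding the square produces a main term $\frac{1}{n_2-1}\sum_j (h_2(y_{2j}) - \bar h_2)^2$, with $\bar h_2 = n_2^{-1}\sum h_2(y_{2j})$, plus a pure error piece in the $\epsilon_j^2$ and a cross piece.

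By the strong law of large numbers applied to the i.i.d.\ sequence $\{h_2(y_{2j})\}_{j=1}^{n_2}$, the main term converges to $\sigma_{01}^2$. The remaining pieces are controlled by a second-moment computation: because all $\epsilon_j$ are built from the common sample $Y_1$, they are not independent, but conditioning on $Y_2$ shows that for $j\ne k$ the covariance $\Ex[\epsilon_j\epsilon_k]$ collapses to an average of $n_1$ terms and is therefore $O(1/n_1)$. This yields $\Ex\bigl[n_2^{-1}\sum_j \epsilon_j^2\bigr] = O(1/n_1)$ and $\Ex[\bar\epsilon^{\,2}] = O(1/n_1)$, so Markov's inequality makes the pure error piece $o_P(1)$ and Cauchy--Schwarz disposes of the cross piece.

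The main obstacle is exactly this bookkeeping with correlated $\epsilon_j$: a naive i.i.d.\ argument fails because every $p_j$ shares the placebo data, so one must exploit conditional independence given $Y_2$ (equivalently, invoke the two-sample Hoeffding projection underlying Theorem \ref{T8}) to get second-moment bounds that actually shrink in $n_1$. Once these are in hand, Slutsky's theorem closes the argument; the extension from the continuous case of Theorem \ref{T8} to the merely ``not constant'' setting of Theorem \ref{T1} uses the same decomposition, since $h_2$ was defined to absorb the tie correction and all moments remain uniformly bounded.
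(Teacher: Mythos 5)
Your argument is correct, but it is not the route the paper takes: the paper offers no proof of Theorem \ref{T1} at all, instead citing Brunner and Munzel (2000) and devoting Section \ref{CWS} to the purely algebraic verification that the rank-based variance $\hat\sigma_N^2$ of Theorem \ref{T3} coincides with the expression $\frac{\V ar(Y_2^0)}{n_2}+\frac{\V ar(Y_1^0)}{n_1}$ via the identities $p_j=(R_{2j}-\tilde R_{2j})/n_1$. You instead supply the substantive analytic content: the Hoeffding/H\'ajek projection $p_j=h_2(y_{2j})+\epsilon_j$, the identification $\V ar(h_2(\eta))=\sigma_{01}^2$, the second-moment bookkeeping for the correlated $\epsilon_j$ (correctly exploiting that for $j\neq k$ only the $n_1$ diagonal terms survive after conditioning on $Y_2$, giving $\Ex[\epsilon_j\epsilon_k]=O(1/n_1)$), and Slutsky's theorem on top of Theorem \ref{T8}. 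This is essentially the Brunner--Munzel argument reconstructed from scratch; it buys a self-contained proof and an explanation of \emph{why} the plug-in variance is consistent, whereas the paper's route buys the computational link to ranks. Two caveats you should make explicit rather than wave at: (i) Theorem \ref{T8} as stated assumes continuous distributions, so to cover the theorem's actual hypothesis you must invoke the two-sample U-statistic CLT for the tie-corrected kernel $h(x,y)=\1\{x<y\}+0.5\,\1\{x=y\}$ with $\sigma_{10}^2,\sigma_{01}^2$ redefined accordingly, not merely note that ``moments remain bounded''; and (ii) the standardized convergence requires the limiting variance to be strictly positive, and ``$\xi$ and $\eta$ not constant'' does not by itself obviously preclude degeneracy of both projection variances, so a line verifying non-degeneracy (or flagging it as an implicit assumption) is needed to close the Slutsky step.
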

The proof of Theorem \ref{T1} can be found in \cite{Brun2000}, where the theorem is formulated slightly differently (see Theorem \ref{T3}). In Section \ref{CWS}, we will show the equivalence of the formulations of these theorems.

Since from \eqref{Eq9} we see that $\hat\theta_N=\frac{1}{n_2}\sum_{j=1}^{n_2}p_{j}$ and $1-\hat\theta_N=\frac{1}{n_1}\sum_{i=1}^{n_1}q_{i},$ Theorem \ref{T1} allows construction of a confidence interval for the win probability \eqref{Eq4} using only the mean values and variances of the samples \eqref{Eq7}.

\begin{example}\label{Ex4}
Suppose that $\xi\sim\N(2,4^2)$ and $\eta\sim\N(4,2^2)$ are independent, normally distributed random variables. The win probability is (see formula \eqref{Eq10}) $\theta=0.673.$ If we sample $n_1=100$ numbers from $\xi$ and $n_2=500$ numbers from $\eta$, then using \eqref{Eq5} and Theorem \ref{T1} we can calculate the win proportion $\hat\theta_N=0.64332$ and its standard error $se=0.0357$. The plot below shows the convergence of the win proportion to the win probability when $n_1=100$ is fixed and $n_2$ is changing from 1 to 500 in a single random sample.

\begin{figure}[H]\label{P4}\caption{Convergence to win probability illustrated by a random sample}
\includegraphics[width=10cm]{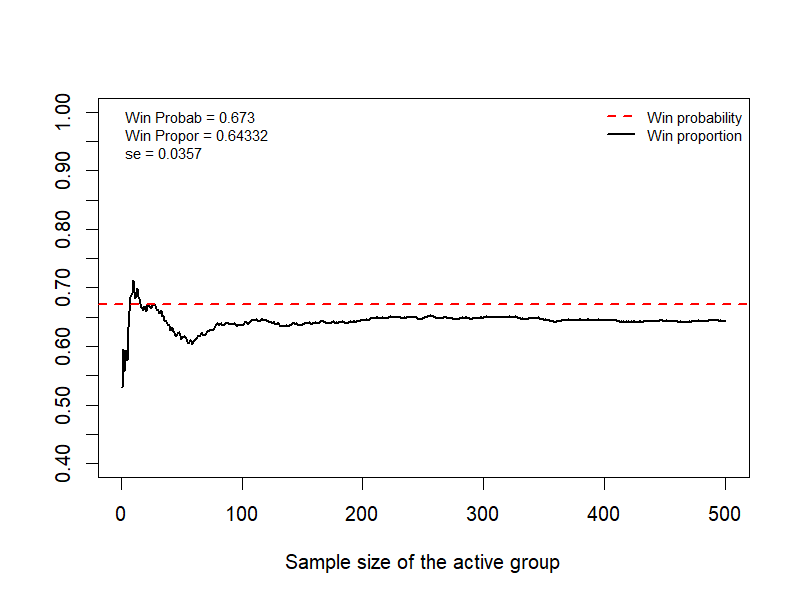}
\end{figure}
\end{example}
The left side of this figure corresponds to the case when $n_1=100$ and $n_2=1,$ hence the estimator (the win proportion) is far from the ``unknown" true parameter $\theta$ (the win probability). As we move to the right along the horizontal axis, the sample size of the active group increases to 500, and the rightmost part of the figure corresponds to the case of $n_1=100$ and $n_2=500$. Therefore we see, that by keeping the sample size of the placebo group constant and gradually increasing the sample size of the active group, the performance of the win proportion improves, and it becomes closer to the win probability. The same would be true if we were to fix the sample size of the active group and gradually increase the sample size of the placebo group. This figure illustrates the importance of the condition in Theorem \ref{T1} that the sample sizes of both treatment groups need to tend to infinity.

\subsection{Comparison to Wilcoxon rank-sum statistic}\label{CWS}
In this section we give details about the well known relationship between the Mann-Whitney statistic described in Section \ref{WPE} and the {\it Wilcoxon two sample rank-sum test}. Consider the combined sample of length $N=n_1+n_2$ of response values across the two treatment groups
\begin{align}\label{Eq18}
Y=(y_{11},\cdots,y_{1n_1}, y_{21},\cdots,y_{2n_2}). 
\end{align}
Denote by $(R_{11},\cdots,R_{1n_1},R_{21},\cdots,R_{2n_2})$ the ranks of the sample $Y.$ To get the ranks we need to order the values $y_{ij}$ increasingly. If all the values are different, then the smallest value will have the rank 1 and the biggest value will have the highest rank equal to $N$. If several values are equal, then each one of them will get the mean value of their ranks. For example, if we have the numbers $(3,3,2,1,4,4,4,4,4)$ then their ranks will be $(3.5,3.5,2,1,7,7,7,7,7)$. Consider the sum of the ranks of the second (active treatment) group
\begin{align*}
W= \sum_{j=1}^{n_2}R_{2j}.
\end{align*}
This rank-based statistic is called the {\it Wilcoxon rank-sum statistic}. If we return to our concepts of ``winning" and ``losing", then, for the sample $Y$ in \eqref{Eq18} with no ties, by subtracting 1 from the rank of the value, we will get the number of wins of that value in the entire sample. That is, $y_{2j}$ wins against $R_{2j}-1$ numbers in the combined sample $Y.$ For example, if the rank of a value is 5, then this value wins against 4 other values. Now, if we introduce ranks in the second group separately, $(\tilde R_{21},\cdots, \tilde R_{2n_2}),$ in the same way, $\tilde R_{2j}-1$ will show the number of wins of the value $y_{2j}$ against its own group. Hence, if we take the rank of the value in the entire sample and subtract the rank of the value in its own sample , $R_{2j}-\tilde R_{2j},$ we will get the number of wins of that value against the other group. Hence, if we take sum of the ranks of the second group in the entire sample and subtract the sum of the ranks of the second group ranked separately, then we will get the total number of wins of the second group against the first group. Hence,
\begin{align}\label{Eq21}
n_1n_2 \hat \theta_N=\sum_{j=1}^{n_2}(R_{2j}-\tilde R_{2j}).
\end{align}
The sum of all ranks in the second group is always $\frac{n_2(n_2+1)}{2}$ as it is the sum of numbers $1,\cdots,n_2$. Thus,
\begin{align}\label{Eq20}
n_1n_2 \hat \theta_N=W-\frac{n_2(n_2+1)}{2},
\end{align}
which is the well known relationship between the {\it Wilcoxon rank-sum statistic} $W$ and the {\it Mann-Whitney} statistic.
\begin{remark}\label{R3} 
Moreover, the relationship \eqref{Eq20} (therefore also \eqref{Eq21}) is true even in the situations where there are ties. Indeed, a rank corresponds to the number of wins + 1/2 the number of ties and added 1 in the corresponding set being ranked. Therefore, assigning the mean value of the ranks to the equal values is equivalent to adding the half of all ties to the sum of wins (see the definition in \eqref{Eq5}).
\end{remark}
The following theorem is due to \cite{Brun2000}.
\begin{theorem}\label{T3}
Suppose that the random variables $\xi$ and $\eta$ are independent and not constant. The estimator \eqref{Eq5} of the win probability \eqref{Eq4} is asymptotically normal and
\begin{align*}
\frac{\hat\theta_N-\theta}{\hat\sigma_N}\Longrightarrow\N(0,1),\ \ \text{as }n_1\rightarrow+\infty,n_2\rightarrow+\infty,
\end{align*}
where the variance (or squared standard error) is equal to 
\begin{align*}
\hat\sigma_N^2&=\frac{1}{n_1(n_1-1)n_2^2}\sum_{i=1}^{n_1}\left(R_{1i}-\tilde R_{1i}-\bar R_1+\frac{n_1+1}{2}\right)^2+\\
&+\frac{1}{n_2(n_2-1)n_1^2}\sum_{j=1}^{n_2}\left(R_{2j}-\tilde R_{2j}-\bar R_2+\frac{n_2+1}{2}\right)^2,
\end{align*}
and $\bar R_1=\frac{1}{n_1}\sum_{i=1}^{n_1}R_{1i}$ and $\bar R_2=\frac{1}{n_2}\sum_{j=1}^{n_2}R_{2j}$.
\end{theorem}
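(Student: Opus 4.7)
The plan is to deduce Theorem \ref{T3} directly from Theorem \ref{T1} by showing that the rank-based variance $\hat\sigma_N^2$ coincides, term for term, with the plug-in variance $\V ar(Y_2^0)/n_2+\V ar(Y_1^0)/n_1$ already established in Theorem \ref{T1}. Since Theorem \ref{T1} already delivers asymptotic normality, the remaining work is purely algebraic: translate $p_j$ and $q_i$ into ranks and simplify.

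First I would establish the per-subject version of the identity \eqref{Eq21}. The argument preceding \eqref{Eq21}, together with Remark \ref{R3} to cover ties, shows that for each fixed $j$ the number of wins plus half the ties of $y_{2j}$ against the placebo sample equals $R_{2j}-\tilde R_{2j}$. Comparing with the definition \eqref{Eq6} gives $n_1 p_j = R_{2j}-\tilde R_{2j}$ for every $j=1,\dots,n_2$. The same reasoning applied with the roles of the two groups swapped (losses of $y_{1i}$ against active are wins of $y_{1i}$ against active in the reversed comparison) yields $n_2 q_i = R_{1i}-\tilde R_{1i}$ for every $i=1,\dots,n_1$.

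Next I would centre. Averaging the first identity over $j$ and using $\sum_{j=1}^{n_2}\tilde R_{2j} = n_2(n_2+1)/2$ gives $\hat\theta_N = (\bar R_2 - (n_2+1)/2)/n_1$, and similarly $1-\hat\theta_N = (\bar R_1 - (n_1+1)/2)/n_2$. Subtracting these from the per-subject identities produces
\begin{align*}
p_j - \hat\theta_N &= \frac{1}{n_1}\Bigl(R_{2j}-\tilde R_{2j}-\bar R_2 + \tfrac{n_2+1}{2}\Bigr),\\
q_i - (1-\hat\theta_N) &= \frac{1}{n_2}\Bigl(R_{1i}-\tilde R_{1i}-\bar R_1 + \tfrac{n_1+1}{2}\Bigr).
\end{align*}
Squaring, summing, and dividing by the relevant sample size factors in the definitions of $\V ar(Y_2^0)/n_2$ and $\V ar(Y_1^0)/n_1$ from Theorem \ref{T1} reproduces exactly the two sums appearing in $\hat\sigma_N^2$. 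Combined with Theorem \ref{T1}, this is Theorem \ref{T3}.

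The only non-routine ingredient is the per-subject rank identity $n_1 p_j = R_{2j}-\tilde R_{2j}$: in the tie-free case it is the elementary fact that the combined rank of $y_{2j}$ minus its rank within the active group equals the number of placebo observations it exceeds, and in the tied case one has to invoke Remark \ref{R3} to see that midranks precisely add half of each tie on the right-hand side. Once this is in hand, everything else is bookkeeping with the means $\bar R_1,\bar R_2$.
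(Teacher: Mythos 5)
Your proposal is correct and follows essentially the same route as the paper: the paper also treats Theorem \ref{T3} as an equivalent reformulation of Theorem \ref{T1} (with the asymptotic normality itself credited to \cite{Brun2000}) and establishes the equivalence via the per-subject identity $p_j=\frac{R_{2j}-\tilde R_{2j}}{n_1}$ (and its analogue for $q_i$), the mean identity $\hat\theta_N=\frac{1}{n_1}\left(\bar R_2-\frac{n_2+1}{2}\right)$, and the resulting term-by-term match of the two variance expressions, with Remark \ref{R3} covering the midrank/tie case exactly as you invoke it.
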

Based on previous considerations, we can write the win proportion (see \eqref{Eq6}) of each subject in the active group as $p_j=\frac{R_{2j}-\tilde R_{2j}}{n_1}.$ On the other hand (see \eqref{Eq20}),
\begin{align*}
\frac{1}{n_1}\left(\bar R_2-\frac{n_2+1}{2}\right)=\frac{1}{n_1n_2}\sum_{j=1}^{n_2}R_{2j}-\frac{(n_2+1)n_2}{2n_1n_2}=\frac{1}{n_1n_2}\sum_{j=1}^{n_2}\left(R_{2j}-\tilde R_{2j}\right),
\end{align*}
and finally, using \eqref{Eq21} and \eqref{Eq8}, we get
\begin{align*}
\frac{1}{n_1n_2}\sum_{j=1}^{n_2}\left(R_{2j}-\tilde R_{2j}\right)=\frac{1}{n_2}\sum_{j=1}^{n_2}p_j=\hat\theta_N,
\end{align*}
which means that
\begin{align*}
\frac{1}{n_2(n_2-1)n_1^2}\sum_{j=1}^{n_2}\left(R_{2j}-\tilde R_{2j}-\bar R_2+\frac{n_2+1}{2}\right)^2=\frac{1}{n_2(n_2-1)}\sum_{j=1}^{n_2}\left(p_j-\hat\theta_N\right)^2.
\end{align*}
Using the same arguments we get
\begin{align*}
\hat\sigma_N^2=\frac{1}{n_1(n_1-1)}\sum_{i=1}^{n_1}\left(q_i-(1-\hat\theta_N)\right)^2+\frac{1}{n_2(n_2-1)}\sum_{j=1}^{n_2}\left(p_j-\hat\theta_N\right)^2,
\end{align*}
which is the same variance as in Theorem \ref{T1}. Thus, the classical relationship between the Mann-Whitney statistic and Wilcoxon rank-sum statistic holds for the tied values as well. The estimator of the win probability is
\begin{align}\label{Eq23}
\hat\theta_N=\frac{1}{n_2}\sum_{j=1}^{n_2}p_j=\frac{1}{n_1n_2}\sum_{j=1}^{n_2}\left(R_{2j}-\tilde R_{2j}\right)
\end{align}
and the standard error of this estimate can be easily calculated using the means and the variances of the sample \eqref{Eq7}.

\begin{remark}\label{R8} 
Formula \eqref{Eq23} and that for $\hat\sigma_N^2$ in Theorem \ref{T3} are computationally less intense  to implement than formula \eqref{Eq8}. Therefore, formula \eqref{Eq23} is more preferable. For formulas \eqref{Eq21}, \eqref{Eq20} and \eqref{Eq23} see the last paragraph of {\it Appendix VII: Computations} in \cite{Koch1998}.
\end{remark}

\subsection{Hypothesis testing for win ratio} \label{WR}

Alongside the win probability, which was introduced to characterize the difference in treatment effects in two treatment groups, we will consider also the {\it the win ratio} which is the odds of winning of the active group against the placebo group. The win ratio, $\kappa,$ is defined as the win probability divided by the probability of loss (having ties equally divided between wins and losses), that is
\begin{align}\label{Eq12}
\kappa=\frac{\theta}{1-\theta}=\frac{\Pb(\eta>\xi)+0.5\Pb(\eta=\xi)}{\Pb(\eta<\xi)+0.5\Pb(\eta=\xi)}.
\end{align}
A win ratio $\kappa=1$ corresponds to the win probability being equal to $\theta=\frac{1}{2}.$ Hence, $\kappa>1$ corresponds to having a positive treatment effect. As we saw in the previous examples, a positive mean difference corresponds to $\kappa>1.$ Now consider the setting of time-to-event analysis.

\begin{example}\label{Ex5}
Consider survival times in two treatment groups when there is no censoring. $\eta$ is the survival time in the active group, whereas $\xi$ is the survival time in the placebo group. Suppose that these two independent random variables follow exponential distributions with parameters $\varphi$ and $\lambda$ correspondingly,
\begin{align*}
\eta\sim\E(\varphi),\ \ \xi\sim\E(\lambda).
\end{align*}
In other words, the hazard functions in the two groups are constant. In this case the win probability of the active group is the probability of having longer survival time in the active group. It can be seen that the win probability is
\begin{align*}
\theta=\Pb(\eta>\xi)=\frac{\lambda}{\lambda+\varphi}=\frac{\frac{1}{\varphi}}{\frac{1}{\lambda}+\frac{1}{\varphi}}=\frac{\Ex(\eta)}{\Ex(\xi)+\Ex(\eta)}.
\end{align*}
Hence, in this example the win ratio is the inverse of the hazard ratio,
\begin{align*}
\kappa=\frac{1}{(\varphi/\lambda)}=\frac{\Ex(\eta)}{\Ex(\xi)}=\frac{\lambda}{\varphi}.
\end{align*}
This means that if the hazard ratio of active treatment versus placebo is smaller than 1, then the win ratio is greater than 1. In other words, to have better survival time the treatment group needs to have smaller hazard.
\end{example}
Example \ref{Ex5} shows that in the time-to-event setting without censoring and with constant hazards, the win ratio is the same as the inverse of the hazard ratio. This relationship between the win ratio and the hazard ratio remains true under the more general proportional hazards assumption.

\begin{example}\label{Ex6} 
Suppose again that $\eta$ is the survival time in the active group, whereas $\xi$ is the survival time in the placebo group. Consider the survival functions of these independent random variables,
$$S_\xi(t)=\Pb(\xi\geq t),\ \ S_\eta(t)=\Pb(\eta\geq t).$$
We assume that the hazard functions of these random variables are proportional (the hazard ratio is constant), which means that $S_\eta(t) = S_\xi(t)^{HR},$ where $HR$ is the hazard ratio of random variables $\eta$ and $\xi$.  Also in this case, $\theta= \frac{1}{1+HR}$ and $\kappa=\frac{\theta}{1-\theta}=\frac{1}{HR}.$
\end{example}

The treatment effect comparison can be tested by the following hypothesis for the win ratio
\begin{align}\label{Eq13}
\H_0: \ \ \kappa=1\,  \text{ against } \H_1: \ \ \kappa\neq1.
\end{align}

Theorems \ref{T1}, \ref{T3} allow construction of an asymptotic test of level $\alpha\in(0,1).$ Indeed, for example from Theorem \ref{T3}, we have
\begin{align*}
\Pb(\theta\in[\hat\theta_N-C_\alpha\hat \sigma_N,\hat\theta_N+C_\alpha\hat \sigma_N])\rightarrow1-\alpha, 
\end{align*}
where $C_\alpha$ is the $1-\frac{\alpha}{2}$ quantile of the standard normal distribution. Hence the null hypothesis $\theta=\frac{1}{2}$ will be rejected if 
$$\frac{1}{2}\notin[\hat\theta_N-C_\alpha\hat \sigma_N,\hat\theta_N+C_\alpha\hat \sigma_N].$$ 
The asymptotic type I error of this test will be $\alpha.$ Since $\kappa=\frac{\theta}{1-\theta},$ an estimator for the win ratio is 
\begin{align*}
\hat\kappa_N=\frac{\hat\theta_N}{1-\hat\theta_N}.
\end{align*}
Since the function $f(x)=\frac{x}{1-x}$ is an increasing function for $x\in(0,1),$ its application to the asymptotic confidence interval of the win probability will produce an asymptotic confidence interval for the win ratio as
\begin{align*}
[f(\hat\theta_N-C_\alpha\hat \sigma_N),f(\hat\theta_N+C_\alpha\hat \sigma_N)].
\end{align*}

\subsection{Application of the win ratio} \label{WRA}
Consider a clinical trial where an objective is to compare the change from baseline of the primary variable of interest in two treatment groups at a specific time point, say, at the end of the trial. The primary variable can be a measurement of a biomarker or a score of a questionnaire. If the subject dies during the trial, then the change from baseline of the primary variable will be missing. The assumption of missingness at random may be violated if there is a treatment effect on mortality. A more specific example is given in Section \ref{DAPA}, where the score from a symptom questionnaire is described. There is an apparent correlation between deterioration in symptoms and increased risk of death. Hence, if the subject died before the end of the trial, then we need to incorporate this information into the analysis of the primary variable, if we are interested to measure the treatment effect as it is and not in a conditional setting of subjects being alive. This can be done by considering the composite of the death and the change from baseline in the primary variable of interest, assigning the ``worst" change value to the subjects who die during the trial. In this case, combining numerical values with death will convert the primary variable of interest into an ordinal variable. Then the win ratio can be used to compare the treatment effect in two groups. 

To incorporate death into the analysis of the primary variable, we can choose several strategies. Death is always considered worse than any measured change from baseline. There are the following strategies to manage death:
\begin{enumerate}
\item Treat all deaths as equal, in other words, assign all deaths the same ordinal value.
\item Ordering among deaths is defined based on a characteristic not directly related to death, observed at or after baseline. For example, if there are measurements of the same primary variable taken prior to death, then ordering among deaths can be done based on each individual's last observed value of change from baseline while alive, meaning that if a subject has a higher change before dying than another subject who died, then this subject will have a higher ordinal value than the other subject. In a setting where there are no intermediate measurements taken, such a strategy would correspond to a baseline carried forward approach. More generally, any other measurement of a subject made during the trial, even if it is done not on the primary variable of interest, can be used to define ordering among deaths. Any clinically justified combination of characteristics measured at or after baseline could be used, as long as there is a sound rationale for the importance of such factors.
\item Ordering among deaths is defined by characteristics that are directly related to the event of death. For example, ordering among deaths can be done based on each individual's observed survival time, meaning that if a subject died later than another subject, then this subject will have a higher ordinal value than the other subject. Another example is the definition of the order based on the cause of death.
\end{enumerate} 

In Section \ref{DAPA} we will consider only the first two cases. In the first analysis all deaths will have the same ordinal value. As an alternative approach we will introduce ordering among deaths based on the last observed value of the same primary variable while alive. All other {\it intercurrent} events, for example hospitalizations, happening between the baseline measurement and the measurement done at the prespecified time point will not be included in the analysis and subsequent values of the primary variable of interest will be used. This corresponds to treatment policy strategy of handling intercurrent events, as described in \cite{ICH2019}. The treatment policy strategy is based on the Intent to Treat (ITT) principle. The purpose of the ITT principle is to measure the treatment effect on the variable of interest directly, without using the information of the intercurrent events. The treatment policy strategy cannot be applied to missingness due to death, hence the composite strategy (combining the death with observed values of the variable) to handle deaths is applied.

We note that we have flexibility in combining the numeric value with the so called ``hard" clinical outcomes. Furthermore, this approach allows the specific handling of all intercurrent events (not only death), as defined in \cite{ICH2019}. It is possible to combine intercurrent events by introducing prioritization. For example, if a subject has an event after the baseline measurement and before the end of the trial (the prespecified time point of the measurement), then the occurrence of this event can be considered clinically more important than the measurement of the primary variable at the end of the trial (which is done after the mentioned event), even if observed. Therefore instead of the measurement at the end of the trial, the occurrence of the intercurrent event can be used to inform the estimation of the treatment effect. Returning to the example of the EMPULSE trial described in the Introduction, if the subject experienced a heart failure hospitalization before day 90, then the measurement of the symptoms score at day 90 is not used in the analysis. Instead the timing of HFH (if only one HFH happened) or the total number of hospitalizations (if several HFH happened) are considered clinically more relevant, and the composite strategy is used to handle these intercurrent events. The prioritization of events is defined as follows: deaths are assigned the worst category and the order among deaths is introduced using the time to death (later is better). The next category of ordinal values is introduced using the total number of hospitalizations before day 90 (less is better) and subjects with one HFH during 90 days are compared using the time to HFH (latter is better). Finally, subjects who are alive at day 90 and did not have intercurrent HFH before that time point are compared based on their observed change from baseline of KCCQ-CSS score. 
This approach does not follow the treatment ploicy strategy, instead it uses the composite strategy to handle the intercurrent events. The choice of strategy will shape the definition of the estimand under trial, as defined in \cite{ICH2019}. The composite strategy of handling intercurrent events have more impact on the estimand of the treatment effect on the KCCQ score (unlike the treatment policy strategy), hence it does not measure the treatment effect ``purely" on the KCCQ score, but a type of ``net clinical benefit", which accounts for the most clinically relevant outcomes that occur during the 90 days follow-up time.

\section{Adjustment and stratification}\label{II}
\subsection{Adjusted win probability}\label{AWP}
The theory of analysis of covariance of ordered categorical data is extensively formulated in \cite{Koch1982} and \cite{Koch1998}. Here we will reiterate the main results of these articles.

In this section we will consider the scenario when the response variables are observed with predictor values. For example, we could be interested in the effect of the treatment having observed also some numeric baseline measurement of individuals, for example, the age. Hence the observed samples are
\begin{align}\label{Eq11}
(Y_1,X_1),\ \ (Y_2,X_2),
\end{align}
where $Y_1,Y_2$ are defined in \eqref{Eq1} and  
\begin{align*}
X_1=(x_{11},\cdots,x_{1n_1}),\ \ X_2=(x_{21},\cdots,x_{2n_2}).
\end{align*}
The response variables $Y_1,Y_2$ are in general ordinal, whereas the predictor variables $X_1,X_2$ are often numeric. Here again using the individual win proportions (as in \eqref{Eq7}), we can replace the samples \eqref{Eq11} with the samples
\begin{align*}
(Y_1^0,X_1),\ \ (Y_2^0,X_2),
\end{align*}
where the individual proportions $p_j$ and $q_i$ are calculated using the formulas \eqref{Eq6},\eqref{Eq15}, without taking into consideration the values of the covariates. 
Consider the mean values, the variances of the response variables and covariates, as well as the covariances between the response variables and covariates
\begin{align}\label{Eq16}
&\bar x_{1}=\frac{1}{n_1}\sum_{i=1}^{n_1}x_{1i},\ \ \bar x_{2}=\frac{1}{n_2}\sum_{j=1}^{n_2}x_{2j},\\
& \V ar(x_{1})=\frac{1}{n_1-1}\sum_{i=1}^{n_1}(x_{1i}-\bar x_{1})^2,\ \ \V ar(x_{2})=\frac{1}{n_2-1}\sum_{j=1}^{n_2}(x_{2j}-\bar x_{2})^2,\nonumber\\
& \C ov(x_1,y_1^0)=\frac{1}{n_1-1}\sum_{i=1}^{n_1}(q_i-(1-\hat\theta_N))(x_{1i}-\bar x_{1}),\nonumber\\
&\C ov(x_2,y_2^0)=\frac{1}{n_2-1}\sum_{j=1}^{n_2}(p_j-\hat\theta_N)(x_{2j}-\bar x_{2}).\nonumber
\end{align}
Then, the {\it adjusted win proportion} can be defined as
\begin{align}\label{Eq17}
\hat\beta_N=\hat\theta_N- \frac{\bar x_1- \bar x_2}{\frac{\V ar(x_{1})}{n_1}+\frac{\V ar(x_{2})}{n_2}}\left[\frac{\C ov(x_1,y_1^0)}{n_1} + \frac{\C ov(x_2,y_2^0)}{n_2}\right].
\end{align}

\begin{remark}\label{R5} 
For the randomized trial, $\hat\beta_N$ is an estimator for $\theta,$ since the expectation of the mean difference in covariates is zero, $\Ex(\bar x_1-\bar x_2)=0.$ 
\end{remark}
This estimator provides the win proportion of the active group against the placebo group adjusted for the mean difference in covariates. As it is well known (see, for example, \cite{Koch1998}), the adjustment for covariates provides more powerful tests for the treatment comparison through the variance reduction and accounts for possible random group differences in covariate values, so that the observed treatment effect is not driven by the random difference in covariates. In clinical trials the covariate often represents some numeric measure on patients done at baseline. Denote the covariate dependent win probability by $\beta.$ The following theorem holds
\begin{theorem}\label{T7}
The adjusted win proportion $\hat\beta_N$ is an asymptotically normal estimator for the win probability $\beta$ such that
\begin{align*}
\frac{\hat\beta_N-\beta}{\hat\sigma_N^\beta}\Longrightarrow\N(0,1),\ \ \text{as }N\rightarrow+\infty,
\end{align*}
and the applicable squared standard error is 
\begin{align*}
(\hat\sigma_N^\beta)^2=\frac{\V ar(Y_2^0)}{n_2}+\frac{\V ar(Y_1^0)}{n_1}- \frac{\left[\frac{\C ov(x_1,y_1^0)}{n_1} + \frac{\C ov(x_2,y_2^0)}{n_2}\right]^2}{\frac{\V ar(x_{1})}{n_1}+\frac{\V ar(x_{2})}{n_2}}.
\end{align*}
\end{theorem}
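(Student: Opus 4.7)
The plan is to view $\hat\beta_N$ as a regression-type correction to $\hat\theta_N$, namely $\hat\beta_N = \hat\theta_N - \hat c_N(\bar x_1 - \bar x_2)$ with the data-dependent coefficient
\[
\hat c_N := \frac{\frac{\C ov(x_1,y_1^0)}{n_1} + \frac{\C ov(x_2,y_2^0)}{n_2}}{\frac{\V ar(x_1)}{n_1} + \frac{\V ar(x_2)}{n_2}},
\]
and then to obtain the limit distribution by combining a joint central limit theorem for the pair $(\hat\theta_N,\,\bar x_1 - \bar x_2)$ with Slutsky's lemma applied to the probabilistic limit of $\hat c_N$.

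The first step is joint asymptotic normality of $\sqrt{N}\bigl(\hat\theta_N - \theta,\,\bar x_1 - \bar x_2 - \Ex(\bar x_1 - \bar x_2)\bigr)^{\top}$. The Hoeffding (H\'ajek) projection of the two-sample $U$-statistic $\hat\theta_N$ yields
\[
\hat\theta_N - \theta = \frac{1}{n_1}\sum_{i=1}^{n_1}\bigl(\psi_{10}(y_{1i}) - \theta\bigr) + \frac{1}{n_2}\sum_{j=1}^{n_2}\bigl(\psi_{01}(y_{2j}) - \theta\bigr) + o_P(N^{-1/2}),
\]
where $\psi_{10}(y) = \Pb(\eta > y) + 0.5\,\Pb(\eta = y)$ and $\psi_{01}(y) = \Pb(y > \xi) + 0.5\,\Pb(y = \xi)$; this is the same decomposition that underlies Theorem \ref{T8}. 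Coupled with $\bar x_1 - \bar x_2 = \tfrac{1}{n_1}\sum x_{1i} - \tfrac{1}{n_2}\sum x_{2j}$, both coordinates become sums of i.i.d.\ bivariate vectors within each group (the groups being independent), so the multivariate CLT gives a bivariate normal limit with covariance matrix $\Sigma$ whose entries are built from $\V ar(\psi_{10}(\xi)), \V ar(\psi_{01}(\eta)), \V ar(X_1), \V ar(X_2)$ and the cross-covariances $\C ov(\psi_{10}(\xi),X_1), \C ov(\psi_{01}(\eta),X_2)$.

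For the second step, a law of large numbers argument shows $\hat c_N \to c_0 := \Sigma_{12}/\Sigma_{22}$ in probability. The ordinary LLN handles $\V ar(x_k)$, but the cross-covariances $\C ov(x_k,y_k^0)$ in the numerator involve the individual win proportions $q_i$ and $p_j$, which are themselves $U$-statistic-type averages over the opposite group rather than simple functions of single observations. Using a uniform bound of the form $\max_i|q_i - (1-\psi_{10}(y_{1i}))| = O_P(n_2^{-1/2})$ (and its analogue for $p_j$), which follows from the boundedness of the indicator kernels plus a Glivenko--Cantelli-type argument, one obtains that the sample cross-covariances converge in probability to their correct population counterparts. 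Slutsky's lemma together with the randomization assumption $\Ex(\bar x_1 - \bar x_2) = 0$ from Remark \ref{R5} then yields
\[
\sqrt{N}(\hat\beta_N - \beta) \Longrightarrow \N\bigl(0,\,\Sigma_{11} - 2c_0\Sigma_{12} + c_0^2\Sigma_{22}\bigr) = \N\bigl(0,\,\Sigma_{11} - \Sigma_{12}^2/\Sigma_{22}\bigr).
\]

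To finish, identifying the limit variance $\Sigma_{11} - \Sigma_{12}^2/\Sigma_{22}$ with the sample-based $(\hat\sigma_N^\beta)^2$ is immediate: each population variance and covariance is replaced by its consistent sample counterpart from the previous step, and since the correction term appears squared, any sign convention in the numerator of $\Sigma_{12}$ drops out. The principal technical hurdle in this argument is the cross-group covariance approximation — ensuring that the inner averaging producing $q_i$ and $p_j$ introduces only an $o_P(1)$ error in $\C ov(x_1,y_1^0)$ and $\C ov(x_2,y_2^0)$ — because this is the step that licenses using the purely sample-based variance formula of Theorem \ref{T7} without any reference to unknown population quantities.
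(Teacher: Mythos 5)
The paper does not actually prove Theorem \ref{T7}; it is imported from \cite{Koch1982} and \cite{Koch1998}, so there is no in-paper argument to compare against. Your architecture --- write $\hat\beta_N=\hat\theta_N-\hat c_N(\bar x_1-\bar x_2)$, get joint asymptotic normality of $\sqrt N\bigl(\hat\theta_N-\theta,\ \bar x_1-\bar x_2\bigr)$ from the Hoeffding/H\'ajek projection plus the ordinary CLT for the covariate means, prove consistency of $\hat c_N$, and finish with Slutsky --- is the standard and correct route, and your handling of the one delicate consistency point (the uniform $O_P(n_2^{-1/2})$ replacement of $q_i$ by $1-\psi_{10}(y_{1i})$ and of $p_j$ by $\psi_{01}(y_{2j})$ via a DKW/Glivenko--Cantelli bound) is the right tool for it.

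The genuine gap is in the step you dismiss as ``immediate.'' The limit variance of $Z_1-c_0Z_2$ is $\Sigma_{11}-2c_0\Sigma_{12}+c_0^2\Sigma_{22}$, which collapses to $\Sigma_{11}-\Sigma_{12}^2/\Sigma_{22}$ only when $c_0=+\Sigma_{12}/\Sigma_{22}$; the cross term is \emph{linear} in $\Sigma_{12}$, so the sign does not ``drop out,'' and asserting $\hat c_N\to\Sigma_{12}/\Sigma_{22}$ without computing the sign skips the only non-routine calculation in the whole proof. Carrying it out with the paper's literal definitions gives the opposite sign: since $\Ex[q_i\mid y_{1i}]=1-\psi_{10}(y_{1i})$, the sample quantity $\C ov(x_1,y_1^0)$ of \eqref{Eq16} estimates $\C ov(X_1,q)=-\C ov(X_1,\psi_{10}(\xi))$, whereas the group-1 projection term of $\hat\theta_N$ is $+\frac1{n_1}\sum_i\psi_{10}(y_{1i})$; together with the group-2 term one finds
\begin{align*}
N\left[\frac{\C ov(x_1,y_1^0)}{n_1}+\frac{\C ov(x_2,y_2^0)}{n_2}\right]\longrightarrow-\Sigma_{12},
\end{align*}
so $\hat c_N\to-\Sigma_{12}/\Sigma_{22}$ and your computation actually delivers asymptotic variance $\Sigma_{11}+3\Sigma_{12}^2/\Sigma_{22}$, not the stated $\Sigma_{11}-\Sigma_{12}^2/\Sigma_{22}$. (The same orientation check against the rank-regression correction \eqref{Eq60}, which is written against $\bar x_2-\bar x_1$, shows \eqref{Eq17} corrects in the wrong direction as printed.) A complete proof must pin down this sign convention explicitly --- either by verifying that the estimator is meant with the correction entering as $+\frac{\C ov(\hat\theta_N,\bar x_1-\bar x_2)}{\V ar(\bar x_1-\bar x_2)}(\bar x_1-\bar x_2)$ subtracted, or by flagging the discrepancy in \eqref{Eq17} --- and that verification is exactly what your write-up is missing.
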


\begin{remark}\label{R14} Theorem \ref{T7} is applicable to any numeric covariate including dummy covariates with the values $0$ and $1.$ The method can be extended to the case of ordinal covariates by replacing the mean difference $(\bar x_1-\bar x_2)$ by a win proportion $\hat\theta_{N,x}$ defined by pairwise comparisons of the values of covariates.
\end{remark}

\subsection{Stratified win probability}\label{SWP}

In the stratified analysis, we suppose that each treatment group is divided into two separate subgroups, called strata (we are considering only the case of two strata). The measurements of subjects in different strata have different distributions, even inside the same treatment group. Therefore the model assumes that the measurements of each treatment group are characterized by two random variables each (here as before, $\eta$ denotes the active group, whereas $\xi$ denotes the placebo group)
\begin{align*}
(\xi,\xi'),\ \ (\eta,\eta').
\end{align*}
The samples from these random variables are denoted correspondingly (2 is for the active group, 1 for the placebo group) 
\begin{align}\label{Eq30}
&Y_1=(y_{11},\cdots,y_{1n_{11}}),\ \ Y_1'=(y_{11}',\cdots,y_{2n_{12}}'),\ \ n_1=n_{11}+n_{12},\\ 
&Y_2=(y_{21},\cdots,y_{2n_{21}}),\ \ Y_2'=(y_{21}',\cdots,y_{2n_{22}}'),\ \ n_2=n_{21}+n_{22},\,N=n_1+n_2.\nonumber
\end{align}
For each stratum, the win probability is defined as
\begin{align}\label{Eq31}
\theta=\Pb(\eta>\xi)+0.5\Pb(\eta=\xi)\text{ and } \theta'=\Pb(\eta'>\xi')+0.5\Pb(\eta'=\xi').
\end{align}
The {\it stratified win probability} is defined as 
\begin{align*}
\theta^{str}=\omega \theta+(1-\omega )\theta',\ \ w\in(0,1).
\end{align*}
The null hypothesis for the stratified analysis is 
\begin{align}\label{Eq32}
&\H_0: \ \ \theta=\frac{1}{2}\text{ and } \theta'=\frac{1}{2}\text{ against}\nonumber\\
&\H_1: \ \ \theta\neq\frac{1}{2}\text{ or } \theta'\neq\frac{1}{2}.
\end{align}
Under the null hypothesis $\theta^{str}=\theta=\theta'=\frac{1}{2}$ regardless of the value of $\omega $. Sometimes instead of the weights we will specify only the coefficients $\omega _1,\omega _2$ per stratum, and the weights can be calculated using the formula 
$$\omega =\frac{\omega _1}{\omega _1+\omega _2},\ \ 1-\omega =\frac{\omega _2}{\omega _1+\omega _2}.$$

For each stratum separately, we can construct the win proportions \eqref{Eq5}, denoted correspondingly by $\hat\theta_N$ and $\hat\theta_N'.$  A general method of combining these estimates is to use the weighted sum of the estimates in each stratum as
\begin{align}\label{Eq42}
\hat\theta_N^{str}=\omega \hat\theta_N+(1-\omega)\hat\theta_N'.
\end{align}
The variance of the stratified estimator can be calculated using the formula (since observations in different stratum are independent)
\begin{align}\label{Eq43}
\V ar(\hat\theta_N^{str})=\omega^2\V ar(\hat\theta_N)+\left(1-\omega \right)^2\V ar(\hat\theta_N').
\end{align}
The variances of the win proportions inside each stratum can be estimated using Theorems \ref{T1} and \ref{T3} as
$$\V ar(\hat\theta_N)=\hat\sigma_N^2 \text{ and } \V ar(\hat\theta_N')=(\hat\sigma_N')^2.$$
The weights can be estimated (see Appendix II in \cite{Koch1998}) using the coefficients
$$w_1=\frac{n_{11}n_{12}}{n_1},\, w_2=\frac{n_{21}n_{22}}{n_2},$$
which will give the following estimates of the weights:
\begin{align}\label{Eq44}
 w=\frac{\frac{1}{n_{21}}+\frac{1}{n_{22}}}{\frac{1}{n_{11}}+\frac{1}{n_{12}}+\frac{1}{n_{21}}+\frac{1}{n_{22}}},\ \ 1-w=\frac{\frac{1}{n_{11}}+\frac{1}{n_{12}}}{\frac{1}{n_{11}}+\frac{1}{n_{12}}+\frac{1}{n_{21}}+\frac{1}{n_{22}}}.
\end{align}

\begin{remark}\label{R11} If a balanced design between the treatment groups and the strata is applicable, that is, $n_{11}=n_{12}=n_{21}=n_{22},$ then both weights would be $0.5$. If the treatment group has the same proportion in both strata, $\frac{n_{11}}{n_{12}}=\frac{n_{21}}{n_{22}}=\frac{n_{1}}{n_{2}},$  that is, only balanced allocation of treatment within a stratum is present, then
$$ w=\frac{n_1}{n_1+n_2},$$
and so the larger stratum will get bigger weight.
\end{remark}
Another possible choice of coefficients (see Section \ref{CMH}) is
\begin{align}\label{Eq45}
w_1^0=\frac{n_{11}n_{12}}{n_1+1},\ \ w_2^0=\frac{n_{21}n_{22}}{n_2+1},
\end{align}
which will give the {\it van Elteren weight}, denoted by $w^0$. 
\begin{theorem}\label{T10} For the weights \eqref{Eq44}, \eqref{Eq45} the following convergence holds
$$Z_N^{str}=\frac{\hat\theta_N^{str}-\theta^{str}}{\sqrt{\V ar(\hat\theta_N^{str})}}\Longrightarrow\N(0,1),\ \ n_{ij}\rightarrow+\infty,\,i=1,2,\,j=1,2.$$
\end{theorem}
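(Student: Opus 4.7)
}

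The plan is to reduce Theorem \ref{T10} to the within-stratum asymptotic normality already established in Theorems \ref{T1} and \ref{T3}, exploiting the fact that the two strata produce independent samples. First I would write
\begin{align*}
\hat\theta_N^{str}-\theta^{str}=w\bigl(\hat\theta_N-\theta\bigr)+(1-w)\bigl(\hat\theta_N'-\theta'\bigr),
\end{align*}
and note that because the observations in stratum $1$ and stratum $2$ are drawn from disjoint, independent samples, $\hat\theta_N$ and $\hat\theta_N'$ are stochastically independent. Consequently the variance formula \eqref{Eq43} is exact, and the quantity $Z_N^{str}$ factors as $a_N Z_N^{(1)}+b_N Z_N^{(2)}$, where
\begin{align*}
Z_N^{(1)}=\frac{\hat\theta_N-\theta}{\sqrt{\V ar(\hat\theta_N)}},\ \ Z_N^{(2)}=\frac{\hat\theta_N'-\theta'}{\sqrt{\V ar(\hat\theta_N')}},\ \ a_N=\frac{w\sqrt{\V ar(\hat\theta_N)}}{\sqrt{\V ar(\hat\theta_N^{str})}},\ \ b_N=\frac{(1-w)\sqrt{\V ar(\hat\theta_N')}}{\sqrt{\V ar(\hat\theta_N^{str})}},
\end{align*}
with $a_N^2+b_N^2=1$ by construction.

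Next I would invoke Theorem \ref{T1} (or equivalently Theorem \ref{T3}) inside each stratum to conclude $Z_N^{(1)}\Longrightarrow\N(0,1)$ and $Z_N^{(2)}\Longrightarrow\N(0,1)$ as $n_{ij}\to\infty$. By independence of the strata, the joint vector $(Z_N^{(1)},Z_N^{(2)})$ converges in distribution to a pair of independent standard normals $(Z_1,Z_2)$. Since any linear combination $aZ_1+bZ_2$ with $a^2+b^2=1$ is itself standard normal, the Cram\'er--Wold device together with Slutsky's lemma will yield $Z_N^{str}\Longrightarrow\N(0,1)$ provided the deterministic coefficients $(a_N,b_N)$ converge along the sequence.

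The main obstacle is precisely this last point: $a_N$ and $b_N$ depend on the sample sizes through the explicit weight formulas \eqref{Eq44} and \eqref{Eq45}, so they need not have a single limit without an additional balancedness assumption on the growth of the $n_{ij}$. I would address this by noting that $(a_N,b_N)$ lies on the unit circle, hence is precompact; along every subsequence one can extract a further subsequence on which $a_N\to a$ and $b_N\to b$ with $a^2+b^2=1$, and then Slutsky yields the $\N(0,1)$ limit along that sub-subsequence. Since every subsequence admits the same limit law, the full sequence $Z_N^{str}$ converges to $\N(0,1)$. For the specific weights \eqref{Eq44} and \eqref{Eq45} one can in fact verify directly that, under the standing proportional growth assumption $n_{ij}/N\to\lambda_{ij}\in(0,1)$, both $w$ and $w^0$ converge to strictly positive limits in $(0,1)$ (and the two limits coincide, since the $(n_i+1)^{-1}$ factor in \eqref{Eq45} is asymptotically negligible compared with $n_i^{-1}$), making the subsequential argument unnecessary; the stratum variances $\V ar(\hat\theta_N),\V ar(\hat\theta_N')$ are of order $N^{-1}$ with bounded nonzero limiting coefficients by the proof of Theorem \ref{T8}, so $a_N$ and $b_N$ converge to explicit constants and the conclusion follows.
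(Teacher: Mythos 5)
The paper does not actually supply a proof of Theorem \ref{T10}: it is stated as a consequence of the framework in \cite{Koch1998} and followed only by Remark \ref{R12}, so there is no in-text argument to compare yours against. Your plan is the natural one and is essentially correct: the decomposition $Z_N^{str}=a_NZ_N^{(1)}+b_NZ_N^{(2)}$ with $a_N^2+b_N^2=1$ (which holds exactly because of the definition \eqref{Eq43}), independence of the two strata, the per-stratum asymptotic normality from Theorems \ref{T1}/\ref{T3}, and Slutsky. Two points deserve more care than your write-up gives them. First, $a_N$ and $b_N$ are not deterministic: the weights $w$ from \eqref{Eq44} or \eqref{Eq45} depend only on sample sizes, but $\V ar(\hat\theta_N)$ and $\V ar(\hat\theta_N')$ in \eqref{Eq43} are the \emph{estimated} variances $\hat\sigma_N^2$, $(\hat\sigma_N')^2$, so you need their consistency (i.e.\ that $n\hat\sigma_N^2$ converges in probability to the positive constant given by Theorem \ref{T8}) before $(a_N,b_N)$ can be replaced by constant limits; you do gesture at this in your last sentence, but it is the load-bearing step and should be stated as such, since a subsequential limit $(a,b)$ that is random and correlated with $(Z_1,Z_2)$ would not in general give a standard normal combination. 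Second, your identity $\hat\theta_N^{str}-\theta^{str}=w(\hat\theta_N-\theta)+(1-w)(\hat\theta_N'-\theta')$ silently requires that the target $\theta^{str}$ be formed with the same sample-size-based weight $w$ as the estimator (or that $\theta=\theta'$, as under the null \eqref{Eq32}); this matches the intended reading of the paper but is worth making explicit, since otherwise a deterministic bias term $(\omega-w)(\theta-\theta')$ of order $O(1)$ after scaling would remain. With those two clarifications your argument is complete, and it also justifies Remark \ref{R12}, since nothing in it uses the specific form of \eqref{Eq44} or \eqref{Eq45} beyond $w\in(0,1)$ depending only on sample sizes.
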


\begin{remark}\label{R12} The theorem above is true for a large family of weights depending only on the sample size.
\end{remark}

\subsection{Adjusted win probability with stratification}\label{AWPS}
For a randomized trial, the null hypothesis for the adjusted win probability with stratification is 
\begin{align}\label{Eq53}
&\H_0: \ \ \beta=\frac{1}{2}\text{ and } \beta'=\frac{1}{2}\text{ against}\nonumber\\
&\H_1: \ \ \beta\neq\frac{1}{2}\text{ or } \beta'\neq\frac{1}{2},
\end{align}
where $\beta,\beta'$ are the adjusted win probabilities per stratum. The observed sample consists of response vectors and numeric covariate vectors for two treatment groups (as in \eqref{Eq11}), observed independently for both strata, denoted
\begin{align}\label{Eq47}
\text{Stratum I } (Y_1,X_1),\ \ (Y_2,X_2),\nonumber\\
\text{Stratum II } (Y_1',X_1'),\ \ (Y_2',X_2').
\end{align}
Here, as usual, the first group is the placebo group and the second group is the active group. As in the previous sections we need to replace the ordinal response variables with individual win proportions. 
\begin{align*}
\text{Stratum I } (Y_1^0,X_1),\ \ (Y_2^0,X_2),\nonumber\\
\text{Stratum II } ((Y_1')^0,X_1'),\ \ ((Y_2')^0,X_2').
\end{align*}
The individual win proportions are calculated for each stratum separately, disregarding the values of covariates. There are several methods to estimate the adjusted win probability with stratification. One way of achieving this is to make covariate adjustment first, then use weights to combine these estimators. The approach used in this section will follow \cite{Koch1998}, where stratification is performed first, separately for the crude win proportions and the covariates using the same weights, and then the adjustment of stratified win proportion with stratified covariate is made. Following Section \ref{SWP} we can construct the stratified win probability $\hat\theta_N^{str}$ (see \eqref{Eq42}). On the other hand, the mean difference of covariates (see \eqref{Eq16}) will serve as the statistic for comparison of covariates between treatment groups
\begin{align}\label{Eq48}
\bar x=\bar x_1-\bar x_2,\ \ \bar x'=\bar x_1'-\bar x_2'. 
\end{align}
Using the same weights as for the construction of $\hat\theta_N^{str},$ we can construct the stratified mean difference of covariates
\begin{align}\label{Eq49}
\bar x_N^{str}= w\bar x+(1-w)\bar x'.
\end{align}
Therefore, an estimate for the adjusted win probability with stratification is
\begin{align}\label{Eq50}
\hat\beta_N^{str}=\hat\theta_N^{str} - \frac{\bar x_N^{str}}{\V ar(\bar x_{N}^{str})}\C ov(\bar x_{N}^{str},\hat\theta_N^{str}).
\end{align}
Here $\V ar(\bar x_{N}^{str})$ and $\C ov(\bar x_{N}^{str},\hat\theta_N^{str})$ are calculated as $\V ar(\hat\theta_N^{str})$ in \eqref{Eq43}
\begin{align}\label{Eq51}
\V ar(\bar x_{N}^{str})= w^2\V ar(\bar x_{N})+\left(1- w\right)^2\V ar(\bar x_{N}'),\nonumber\\
\C ov(\bar x_{N}^{str},\hat\theta_N^{str})= w^2\C ov(\bar x_{N},\hat\theta_N)+\left(1- w\right)^2\C ov(\bar x_{N}',\hat\theta_N'),
\end{align}
while inside each stratum  $\V ar(\bar x_{N}),\,\C ov(\bar x_{N},\,\hat\theta_N),\,\V ar(\hat\theta_N)$ are calculated as in \eqref{Eq16},\eqref{Eq17}. Below are the formulas for the stratum I
\begin{align}\label{Eq52}
&\V ar(\hat\theta_N)=\frac{\V ar(y_1^0)}{n_{11}} + \frac{\V ar(y_2^0)}{n_{12}},\ \ \V ar(\bar x_{N})=\frac{\V ar(x_1)}{n_{11}} + \frac{\V ar(x_2)}{n_{12}},\nonumber\\
&\C ov(\bar x_{N},\,\hat\theta_N)=\frac{\C ov(x_1,y_1^0)}{n_{11}} + \frac{\C ov(x_2,y_2^0)}{n_{12}}.
\end{align}
The following result is from \cite{Koch1998}.
\begin{theorem}\label{T12} The following asymptotic result holds for a randomized trial
$$Z_N^{AdS}=\frac{\hat\beta_N^{str}-\beta^{str}}{\sqrt{\V ar(\hat\beta_N^{str})}}\Longrightarrow\N(0,1),\ \ n_{ij}\rightarrow+\infty,\,i=1,2,\,j=1,2,$$
where $\beta^{str}=w\beta+(1-w)\beta',\ \ w\in(0,1)$ and 
$$\V ar(\hat\beta_N^{str})=\V ar(\hat\theta_N^{str})-\frac{[\C ov(\bar x_{N}^{str},\,\hat\theta_N^{str})]^2}{ \V ar(\bar x_{N}^{str})}.$$
\end{theorem}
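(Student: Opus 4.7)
The plan is to generalize the proof of Theorem \ref{T7} from one stratum to the weighted stratified setting by first establishing joint bivariate asymptotic normality of the pair $(\hat\theta_N^{str},\bar x_N^{str})$, and then viewing $\hat\beta_N^{str}$ as a linear projection of $\hat\theta_N^{str}$ onto the direction orthogonal to $\bar x_N^{str}$. Within stratum I the pair $(\hat\theta_N,\bar x_1-\bar x_2)$ consists of a two-sample U-statistic coupled with a difference of sample means; a bivariate extension of the Hoeffding central limit theorem (\cite{Hoef1948}, \cite{VdV2000} Chapter 12, with the multivariate generalization in \cite{Puri1971}) yields joint asymptotic normality as $n_{11},n_{12}\to\infty$, with variances and covariance consistently estimated by the quantities in \eqref{Eq52}. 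The analogous statement holds independently for stratum II. Because the two strata are sampled independently, the weighted sums $\hat\theta_N^{str}$ and $\bar x_N^{str}$ are jointly bivariate normal in the limit, with the variances and covariance given by \eqref{Eq43} and \eqref{Eq51}. Randomization inside each stratum ensures $\Ex(\bar x)=\Ex(\bar x')=0$ as in Remark \ref{R5}, so $\bar x_N^{str}$ is centered at zero while $\hat\theta_N^{str}$ is centered at $\theta^{str}=w\beta+(1-w)\beta'=\beta^{str}$.

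Next, let $c=\C ov(\bar x_N^{str},\hat\theta_N^{str})/\V ar(\bar x_N^{str})$ denote the population regression coefficient and let $\hat c$ be the plug-in estimate built from \eqref{Eq52} and \eqref{Eq51}; each constituent sample (co)variance converges in probability to its population analogue by the law of large numbers, so $\hat c\to c$ in probability. Writing
$$\hat\beta_N^{str}-\beta^{str}=\bigl(\hat\theta_N^{str}-\beta^{str}\bigr)-c\,\bar x_N^{str}-(\hat c-c)\,\bar x_N^{str},$$
the bound $\sqrt{N}\,\bar x_N^{str}=O_p(1)$ combined with $\hat c-c=o_p(1)$ makes the final term $o_p(N^{-1/2})$, so by Slutsky's theorem $\hat\beta_N^{str}$ shares the Gaussian limit of the deterministic linear combination $\hat\theta_N^{str}-c\,\bar x_N^{str}$. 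A direct expansion gives
$$\V ar(\hat\theta_N^{str}-c\,\bar x_N^{str})=\V ar(\hat\theta_N^{str})-2c\,\C ov(\bar x_N^{str},\hat\theta_N^{str})+c^2\,\V ar(\bar x_N^{str}),$$
and substituting the definition of $c$ collapses the last two terms to $-[\C ov(\bar x_N^{str},\hat\theta_N^{str})]^2/\V ar(\bar x_N^{str})$, matching the variance formula in the statement.

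The principal obstacle is the first step: establishing the bivariate CLT for a (two-sample U-statistic, sample mean) pair inside each stratum with potentially unequal arm sizes. The win-proportion component requires the Hoeffding projection to extract a leading linear term sharing the variance structure of Theorem \ref{T8}, after which the joint linearization with the covariate sample mean is handled by the multivariate Lindeberg-Levy CLT; verifying that the Hoeffding remainder is $o_p(N^{-1/2})$ jointly with the sample-mean component is the only genuinely technical point. Once the within-stratum bivariate CLT is in hand, independence of strata makes the weighted-sum and the projection steps routine algebra, and the consistency of the plug-in coefficient $\hat c$ is standard.
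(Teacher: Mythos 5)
The paper offers no proof of Theorem \ref{T12}; it is stated as a result taken from \cite{Koch1998}, so there is no internal argument to compare yours against. Your sketch is nonetheless a correct reconstruction of the standard randomization-based covariance-adjustment argument underlying that reference. The three ingredients --- (i) joint asymptotic normality of the pair consisting of the win proportion and the covariate mean difference within each stratum, obtained from the Hoeffding projection of the two-sample U-statistic combined with the ordinary CLT for the covariate means, (ii) independence across strata to pass to the weighted sums, and (iii) Slutsky's theorem to replace the plug-in regression coefficient $\hat c$ by its probability limit $c$, using $\sqrt{N}\,\bar x_{N}^{str}=O_p(1)$, which is exactly where the randomization hypothesis and Remark \ref{R5} enter --- are the right ones, and your variance algebra reproduces the stated formula. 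Two small points would deserve explicit care in a full write-up: first, the weights in \eqref{Eq44} or \eqref{Eq45} are themselves functions of the sample sizes and need not converge as $n_{ij}\rightarrow+\infty$ unless the allocation ratios stabilize, but since the same weights appear in $\V ar(\hat\theta_N^{str})$ and $\C ov(\bar x_{N}^{str},\hat\theta_N^{str})$ the studentized ratio $Z_N^{AdS}$ is unaffected (the same looseness is already present in Theorem \ref{T10}); second, one must assume the limiting variance of the adjusted statistic is positive (non-constant responses and covariates, as in Theorem \ref{T1}) so that the denominator is bounded away from zero in probability. Neither is a gap in the idea; they are the usual regularity caveats.
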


\section{Win ratio and rank tests}\label{III}
In this section we consider several well known tests and compare them with the tests described in previous sections. The test for non-adjusted win ratio (see the Section \ref{WR}) will be compared with the tests for the location problem ({\it Wilcoxon two sample rank-sum test} and the {\it Fligner-Policello test}) and the win probability itself will be compared with the {\it Hodges-Lehmann estimator}. The test for the stratified win probability (see the Section \ref{SWP}) will be compared with the {\it Cochran-Mantel-Haenszel test} and the test based on the adjusted win probability with stratification (see the Section \ref{AWPS}) will be compared with the {\it rank ANCOVA}. 

\subsection{Tests for the location problem}\label{TLP}

Testing the hypothesis $\theta=\frac{1}{2}$ or, equivalently, $\kappa=1$ is closely related to location testing in a situation where the distribution function of the second sample represents a shifted version of the distribution of the first sample (see Remark \ref{R7}). Given the two samples \eqref{Eq1}, suppose that their distribution functions differ only by a shift
\begin{align*}
F_\xi(y)=\Pb(\xi\leq y),\ \ F_\eta(y)=\Pb(\eta\leq y)=F_\xi(y-\delta), \text{ for all } y\in\R.
\end{align*} 
The hypothesis of interest is
\begin{align}\label{Eq24}
\H_0: \ \ \delta=0\,  \text{ against } \H_1: \ \ \delta\neq0.
\end{align}
In the subsequent sections  we will compare the estimators and the tests for the win probability with the shift estimators and tests in the location problem.
\subsubsection{Wilcoxon two-sample rank-sum test}\label{WTST}
The following theorem (see, for example, \cite{Gib2010}, page 290) allows construction of asymptotic tests for the hypothesis \eqref{Eq24}.

\begin{theorem}[Wilcoxon]\label{T2}
Under the null hypothesis $\delta=0$ the following convergence for the Wilcoxon rank-sum statistic $W= \sum_{j=1}^{n_2}R_{2j}$ holds 
\begin{align*}
Z_N=\frac{W-n_2\bar R_N}{\sqrt{\frac{n_1n_2}{N}\V ar(R)}}\Longrightarrow\N(0,1),\ \ \text{as }N\rightarrow+\infty,
\end{align*}
where the ranks of the sample \eqref{Eq18} are denoted by $R=(R_{11},\cdots,R_{1n_1},R_{21},\cdots,R_{2n_2}).$ The mean and the variance of all ranks are denoted correspondingly by $\bar R_N=\frac{1}{N}(\sum_{i=1}^{n_1}R_{1i}+\sum_{j=1}^{n_2}R_{2j})$ and $$\V ar(R)=\frac{1}{N-1}\sum_{i=1}^{n_1}\left(R_{1i}-\bar R_N\right)^2+\frac{1}{N-1}\sum_{j=1}^{n_2}\left(R_{2j}-\bar R_N\right)^2.$$
\end{theorem}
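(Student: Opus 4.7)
The plan is to exploit the fact that under $\H_0$ the $N$ response values form an i.i.d.\ sample from a common distribution, so the group labels are exchangeable and independent of the order statistics of the pooled sample. Conditional on the multiset of observed values---equivalently, on the rank vector $R$---the set of indices belonging to the second group is uniformly distributed over all $\binom{N}{n_2}$ size-$n_2$ subsets of $\{1,\ldots,N\}$. Consequently $W=\sum_{j=1}^{n_2}R_{2j}$ is exactly the total obtained by simple random sampling without replacement of $n_2$ elements from the finite population of $N$ ranks.

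First I would compute the conditional moments via the standard sampling-without-replacement formulas,
\begin{align*}
\Ex[W\mid R]=n_2\bar R_N,\qquad \V ar(W\mid R)=\frac{n_1n_2}{N}\,\V ar(R),
\end{align*}
where $\V ar(R)$ is exactly the quantity displayed in the theorem; these match the centering and scaling of $Z_N$. For continuous $\xi,\eta$ the rank vector is deterministically a permutation of $\{1,\ldots,N\}$, so both conditional moments are in fact nonrandom.

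Next I would invoke Hoeffding's combinatorial CLT for simple random sampling without replacement. Writing
\begin{align*}
W-n_2\bar R_N=\sum_{k=1}^{N}(R_k-\bar R_N)\,\1\{k\in S\},
\end{align*}
with $S$ uniform on the size-$n_2$ subsets of $\{1,\ldots,N\}$, the theorem yields $Z_N\Longrightarrow\N(0,1)$ provided $\min(n_1,n_2)\to\infty$ together with a Hajek-type negligibility condition
\begin{align*}
\frac{\max_{1\leq k\leq N}(R_k-\bar R_N)^2}{\sum_{k=1}^{N}(R_k-\bar R_N)^2}\longrightarrow 0.
\end{align*}
For ranks equal to $\{1,\ldots,N\}$ this is immediate: the numerator is $O(N^2)$ while the denominator equals $N(N^2-1)/12=O(N^3)$, so the ratio is $O(1/N)$.

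The main obstacle is the combinatorial CLT itself, which I would invoke as a standard tool (alternatively one can prove it by Hoeffding's projection, approximating $W$ by a sum of independent terms with negligible $L^2$ remainder). A minor subtlety is the tied case, where $R$ takes random mid-ranks rather than $\{1,\ldots,N\}$; the argument then proceeds by conditioning on $R$, applying the same finite-population CLT on each realization, and verifying Hajek's condition in probability---which is straightforward since mid-ranks still lie in $[1,N]$ and the nondegeneracy of $\xi,\eta$ keeps the sum of squared deviations of order $N^3$ with high probability.
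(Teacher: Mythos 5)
Your argument is correct. The paper does not prove Theorem \ref{T2} at all --- it cites it to Gibbons and Chakraborti --- and what you give is precisely the standard proof one finds there: under the null the group labels are exchangeable, so conditionally on the pooled (mid)ranks $W$ is the total of a simple random sample of size $n_2$ drawn without replacement from the population $\{R_1,\dots,R_N\}$, the finite-population moment formulas reproduce exactly the centering $n_2\bar R_N$ and the scaling $\frac{n_1n_2}{N}\V ar(R)$ in the statement, and the Wald--Wolfowitz/H\'ajek combinatorial CLT applies because the negligibility ratio is $O(1/N)$ for ranks (and, as you note, remains so for midranks from a nondegenerate distribution). The only point worth flagging is that you correctly supply the hypothesis $\min(n_1,n_2)\to\infty$, which the theorem as stated in the paper leaves implicit in ``$N\to+\infty$''; without it (e.g.\ $n_2$ fixed) the conclusion would fail, so your version is actually the more carefully stated one.
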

Modifying the $Z$ value from the Theorem \ref{T2} using the equality (see \eqref{Eq20})
\begin{align*}
\sum_{i=1}^{n_1}R_{1i}+\sum_{j=1}^{n_2}R_{2j}=\frac{N(N+1)}{2}, 
\end{align*}
we can write
\begin{align}\label{Eq25}
Z_N=\frac{W-n_2\frac{N+1}{2}}{\sqrt{\frac{n_1n_2}{N}\V ar(R)}}.
\end{align}
In the location problem the hypothesis $\delta=0$ is equivalent to $\theta=\frac{1}{2},$ hence, from Theorem \ref{T3} we can construct an asymptotic test based on the following statistic
\begin{align}\label{Eq26}
\tilde Z_N=\frac{\hat\theta_N-\frac{1}{2}}{\hat\sigma_N}.
\end{align}
\begin{remark}\label{R4} 
For the {\it Wilcoxon rank-sum test}, the estimated variance in Theorem \ref{T2} is only applicable under the null hypothesis where the distributions of response variables for the two groups are identical (under the null hypothesis in the specification \eqref{Eq24}). More generally, the estimated variance shown in Theorem \ref{T1}, and in its equivalent formulation in Theorem \ref{T3}, is applicable regardless of whether the two groups have the same distribution and thereby is applicable more broadly than under a null hypothesis for either $\theta=\frac{1}{2}$ or equality of distributions. 
\end{remark}
Theorem \ref{T4} compares the {\it Wilcoxon rank-sum test} and the test based on the win probability under the null hypothesis of the location testing specification \eqref{Eq24} (equality of distributions). 
\begin{theorem}\label{T4}
Under the null hypothesis of \eqref{Eq24}, that is when the distributions of random variables $\xi$ and $\eta$ are equal, the following relationship holds  for the statistics \eqref{Eq25}, \eqref{Eq26} 
\begin{align*}
\frac{\tilde Z_N^2}{Z_N^2}=\frac{\V ar(R)}{Nn_1n_2\hat\sigma_N^2}=\frac{1}{n_1n_2}\frac{\frac{\V ar(R)}{N}}{\frac{\V ar(Y_2^0)}{n_2}+\frac{\V ar(Y_1^0)}{n_1}}.
\end{align*}
\end{theorem}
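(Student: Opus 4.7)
The proof is essentially algebraic, and my plan is to show that the numerators of $Z_N$ and $\tilde Z_N$ differ only by the factor $n_1 n_2$, so that $(\hat\theta_N - 1/2)^2$ cancels when we form the ratio, leaving the variance pieces behind.

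First I would rewrite the numerator of the Wilcoxon statistic in terms of the win proportion. From the Mann-Whitney identity \eqref{Eq20} we have $W = n_1 n_2 \hat\theta_N + n_2(n_2+1)/2$, so
\begin{align*}
W - n_2\frac{N+1}{2} = n_1 n_2 \hat\theta_N + \frac{n_2(n_2+1)}{2} - \frac{n_2(N+1)}{2} = n_1 n_2 \hat\theta_N + \frac{n_2(n_2 - N)}{2} = n_1 n_2\left(\hat\theta_N - \tfrac{1}{2}\right),
\end{align*}
using $n_2 - N = -n_1$. This is the key identity linking the two numerators.

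Squaring this, $Z_N^2 = \dfrac{(n_1 n_2)^2 (\hat\theta_N - 1/2)^2}{(n_1 n_2/N)\V ar(R)} = \dfrac{N n_1 n_2 (\hat\theta_N - 1/2)^2}{\V ar(R)}$, while by definition $\tilde Z_N^2 = (\hat\theta_N - 1/2)^2 / \hat\sigma_N^2$. Dividing, the $(\hat\theta_N - 1/2)^2$ cancels and yields the first equality $\tilde Z_N^2 / Z_N^2 = \V ar(R) / (N n_1 n_2 \hat\sigma_N^2)$. The second equality is a pure rewrite: substitute $\hat\sigma_N^2 = \V ar(Y_2^0)/n_2 + \V ar(Y_1^0)/n_1$ from Theorem \ref{T1} and factor $1/(n_1 n_2)$ out, matching the stated right-hand side.

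There is no real obstacle here — equality of distributions is not even needed to make the identities work, since \eqref{Eq20} and the variance formula in Theorem \ref{T1} hold on any sample. The null hypothesis enters only in the motivation (both statistics are asymptotically $\N(0,1)$ so that comparing their squared values is meaningful), not in the algebra. Once the Mann-Whitney rewrite of the numerator is in hand, the remainder is one line of cancellation followed by a cosmetic reshuffling of factors.
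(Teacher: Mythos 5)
Your proof is correct and follows essentially the same route as the paper's: both rest on the Mann--Whitney identity \eqref{Eq20}, which gives $W-n_2(N+1)/2=n_1n_2(\hat\theta_N-\tfrac{1}{2})$, so the numerators of the two statistics agree up to the factor $n_1n_2$ and the ratio reduces to the ratio of the variance estimates. Your side observation that the algebra holds on any sample (the null hypothesis only matters for interpreting the comparison) is also consistent with how the paper uses the result.
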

\begin{proof}
The proof directly follows from Theorems \ref{T2} and \ref{T3}, since from \eqref{Eq20} we have
\begin{align*}
\hat \theta_N-\frac{1}{2}=\frac{W}{n_1n_2}-\frac{n_2(n_2+1)}{2n_1n_2}-\frac{1}{2}=\frac{1}{n_1n_2}\left(W-\frac{n_2(N+1))}{2}\right).
\end{align*}
Hence from \eqref{Eq26} we get
\begin{align*}
\tilde Z_N=\frac{1}{n_1n_2\hat\sigma_N}\left(W-\frac{n_2(N+1))}{2}\right).
\end{align*}
Comparing the last expression with \eqref{Eq25} we will get the proof of the theorem.
\end{proof}
If $n_1=n_2=\frac{N}{2}$ then 
\begin{align*}
\frac{\tilde Z_N^2}{Z_N^2}=\frac{2}{N^2}\frac{\V ar(R)}{\V ar(Y_2^0)+\V ar(Y_1^0)}.
\end{align*}
Under the null hypothesis $\delta=0$ we have equality of distributions of random variables $\xi$ and $\eta$, and if additionally we suppose that this distribution is continuous then (see, for example, page 14, \cite{Leh1975}) $$\V ar(R)=\frac{N(N+1)}{12},$$
then Theorem \ref{T4} can be simplified as follows
\begin{align*}
\frac{\tilde Z_N^2}{Z_N^2}=\frac{1}{3}\frac{N+1}{N^2}\frac{1}{\hat\sigma_N^2}=\frac{N+1}{6N}\frac{1}{\V ar(Y_2^0)+\V ar(Y_1^0)}.
\end{align*}
\begin{remark}\label{R9} 
The quantity $\frac{\tilde Z_N^2}{Z_N^2}$ can be greater or less than 1. Hence it is not possible to say in advance which test will give smaller p-value.
\end{remark}

\subsubsection{The Hodges-Lehmann estimator of location shift}\label{HLE}
The previous section described a method of detecting a significant shift for the location testing problem \eqref{Eq24}. Here we will discuss the Hodges-Lehmann estimator for the shift $\delta,$ which can be used to quantify the treatment effect, while presenting the significance testing based on Theorem \ref{T2}. In this section we additionally require that the random variables $\xi$ and $\eta$ be numeric, to allow calculations on the samples \eqref{Eq1}. 

Consider all possible differences between the two groups in the sample \eqref{Eq1}
\begin{align*}
D_{ji}=Y_{2j}-Y_{1i},\ \ j=1,\cdots,n_2,\ \ i=1,\cdots,n_1.
\end{align*}
The median of all these $n_1n_2$ differences is called the {\it Hodges-Lehmann} estimator (see, for example, \cite{Leh1975}). Hence if we order all the values $D_{ji}$ increasingly and denote the ordered sample by $D_{(k)},\,k=1,\cdots,n_1n_2$ then the {\it Hodges-Lehmann} estimator is equal to
\begin{align*}
\hat \delta&=D_{(\tilde k)},\, \tilde k=\frac{n_1n_2+1}{2},\,n_1n_2 \text{ if is odd,}\\
\hat \delta&=\frac{D_{(\tilde k)}+D_{(\tilde k+1)}}{2},\,\tilde k=\frac{n_1n_2}{2}, \text{ otherwise.}
\end{align*}
It is well known that this estimator does not work well in the presence of ties. Consider a simple model $Y_2=(1,1,2)$ and $Y_1=(0,1,2).$ The ordered differences will be $(-1,-1,0,0,0,1,1,1,2)$ hence $\hat \delta=0$. Which means that the Hodges-Lehmann estimator does not detect the difference between these two samples. However, if we calculate the win proportion of the second group against the first group we will get $\hat\theta_N=0.61.$ Hence, the win proportion is a more adequate measure of difference in this sample. It also provides a corresponding hypothesis test result and requires fewer assumptions. Although there are methods of constructing confidence intervals for the Hodges-Lehmann estimator (see, for example, \cite{Hol1999}), the hypothesis test is usually done using the Wilcoxon test and is not based on the confidence interval itself which is an additional disadvantage of the Hodges-Lehmann estimator. For the Hodges-Lehmann estimator to have good properties, additional assumptions of $\eta-\xi$ having a continuous, symmetric distribution are needed  (\cite{Hol1999}).

\subsubsection{The Fligner-Policello test}\label{FPT}

The Fligner-Policello test (\cite{Flig1981}) is similar to the test based on Theorem \ref{T1}. It uses quantities called {\it placements} which are the same as the individual win proportions (see \eqref{Eq6}, \eqref{Eq15}) and the statistic for testing the hypothesis is based on the samples \eqref{Eq7}.
\begin{theorem}[Fligner-Policello]\label{T5}
In the location testing problem suppose that $F_\xi(\cdot)$ and $F_\eta(\cdot)$ are different. In additional, require these distributions to be symmetric. Denote by $\mu_\xi$ and $\mu_\eta$ the medians (assumed unique) of distributions $\xi$ and $\eta$, respectively. Then, to test the hypothesis
\begin{align*}
\H_0: \ \  \mu_\xi=\mu_\eta\,  \text{ against } \H_1: \ \ \mu_\xi\neq\mu_\eta,
\end{align*}
the following statistic can be used
\begin{align}\label{Eq27}
F_N=\frac{n_1n_2\hat\theta_N-n_1n_2(1-\hat\theta_N)}{2\sqrt{n_2^2(n_1-1)\V ar(Y_1^0)+n_1^2(n_2-1)\V ar(Y_2^0)+n_1n_2\hat\theta_N(1-\hat\theta_N)}},
\end{align}
where $\hat\theta_N$ is the win proportion (see \eqref{Eq9}). Under the null hypothesis $$F_N\Longrightarrow\N(0,1),\text{ as } n_1\rightarrow+\infty,n_2\rightarrow+\infty.$$
\end{theorem}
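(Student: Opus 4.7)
The approach is to reduce the theorem directly to Theorem~\ref{T1}. Two things need to be checked: first, that the Fligner-Policello null forces $\theta=\tfrac{1}{2}$; second, that after pulling the constant $2n_1n_2$ out of both numerator and denominator, the Fligner-Policello variance estimator is asymptotically equivalent to the variance appearing in Theorem~\ref{T1}. Slutsky's lemma then completes the argument.

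\textbf{Step 1 (null implies $\theta=\tfrac{1}{2}$).} Under the null, $\xi$ and $\eta$ are independent with symmetric distributions sharing a common unique median $\mu$. Setting $\zeta=\eta-\xi$ and shifting both variables by $-\mu$, the random variable $\zeta$ is symmetric about $0$. Applying the identity used in Section~\ref{WP} immediately before \eqref{Eq4}, namely $2\Pb(\zeta>0)+\Pb(\zeta=0)=1$, gives $\theta=\Pb(\eta>\xi)+0.5\Pb(\eta=\xi)=\tfrac{1}{2}$.

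\textbf{Step 2 (match the two variance formulas).} Since $n_1n_2\hat\theta_N-n_1n_2(1-\hat\theta_N)=2n_1n_2(\hat\theta_N-\tfrac{1}{2})$, dividing both numerator and denominator of $F_N$ by $2n_1n_2$ gives
\begin{align*}
F_N=\frac{\hat\theta_N-\tfrac{1}{2}}{\sqrt{D_N}},\ \ D_N=\frac{(n_1-1)\V ar(Y_1^0)}{n_1^2}+\frac{(n_2-1)\V ar(Y_2^0)}{n_2^2}+\frac{\hat\theta_N(1-\hat\theta_N)}{n_1n_2}.
\end{align*}
Let $\tilde D_N=\V ar(Y_1^0)/n_1+\V ar(Y_2^0)/n_2$ be the variance from Theorem~\ref{T1}. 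Writing $(n_i-1)/n_i^2=1/n_i-1/n_i^2$, the difference is $D_N-\tilde D_N=-\V ar(Y_1^0)/n_1^2-\V ar(Y_2^0)/n_2^2+\hat\theta_N(1-\hat\theta_N)/(n_1n_2)$. Since the individual win proportions lie in $[0,1]$, both $\V ar(Y_i^0)$ and $\hat\theta_N(1-\hat\theta_N)$ are bounded by $\tfrac{1}{4}$, so $D_N-\tilde D_N=O(1/n_1^2)+O(1/n_2^2)+O(1/(n_1n_2))$. Because $\xi,\eta$ are non-constant the main term $\tilde D_N$ is of order $\max(1/n_1,1/n_2)$, hence $D_N/\tilde D_N\to 1$ in probability. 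Combined with Theorem~\ref{T1}, which gives $(\hat\theta_N-\tfrac{1}{2})/\sqrt{\tilde D_N}\Rightarrow\N(0,1)$, Slutsky's theorem yields $F_N\Rightarrow\N(0,1)$.

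\textbf{Main obstacle.} The conceptual content of the theorem is really the null-symmetry argument of Step~1; the technical subtlety is Step~2, where one must recognise that the Fligner-Policello variance estimator, despite its different algebraic appearance, agrees to first order with the Brunner-Munzel/Koch estimator of Theorem~\ref{T1}. The only place care is needed is in ensuring non-degeneracy of $\V ar(Y_i^0)$ so that the residual $D_N-\tilde D_N$ is genuinely of lower order than $\tilde D_N$; this is where the symmetry/not-constant assumptions on $\xi$ and $\eta$ enter.
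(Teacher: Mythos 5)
Your proof is correct, but it is worth noting that the paper itself offers no proof of Theorem~\ref{T5}: it is stated as a classical result and attributed to Fligner and Policello (1981). What you have done is assemble a self-contained derivation out of pieces the paper scatters around the theorem rather than inside it. Your Step~1 is exactly the content of Remark~\ref{R2} (symmetry and uniqueness of the medians are needed only so that the median null forces $\theta=\tfrac{1}{2}$ via the identity $2\Pb(\zeta>0)+\Pb(\zeta=0)=1$ preceding \eqref{Eq4}); your rewriting of $F_N$ after cancelling $2n_1n_2$ is the same algebraic identity the paper uses in its proof of Theorem~\ref{T6}; and your observation that $D_N$ and $\tilde D_N$ differ by terms of order $O(n_1^{-2})+O(n_2^{-2})+O((n_1n_2)^{-1})$ is a sharpened version of Remark~\ref{R6}, which records only that the extra term $\tfrac{1}{n_1n_2}\hat\theta_N(1-\hat\theta_N)$ vanishes asymptotically. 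The reduction to Theorem~\ref{T1} plus Slutsky is legitimate and arguably more informative than a bare citation, since it makes explicit that the Fligner--Policello test is asymptotically the Brunner--Munzel/Koch test evaluated at $\theta=\tfrac{1}{2}$. The one point to state rather than merely flag is the positivity of the limiting variances: you need $\V ar(Y_1^0)$ and $\V ar(Y_2^0)$ to converge in probability to strictly positive constants so that $\tilde D_N\asymp\max(n_1^{-1},n_2^{-1})$ dominates the residual; this follows from the non-degeneracy hypotheses already built into Theorem~\ref{T1} (the random variables are not constant), so your argument closes, but the dependence deserves an explicit sentence.
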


\begin{remark}\label{R2} 
Assumptions of uniqueness of medians and the symmetry of distributions is only needed to have the equivalence of conditions $\mu_\eta-\mu_\xi>0$ and $\theta>\frac{1}{2}.$ If we test the hypothesis not on the medians but directly on the win probability $\theta$, then these conditions are not needed (as in Theorem \ref{T1}).  
\end{remark}
From Theorems \ref{T2}, \ref{T3} we have
\begin{align*}
Z_N=\frac{\hat\theta_N-\theta}{\sqrt{\frac{\V ar(Y_2^0)}{n_2}+\frac{\V ar(Y_1^0)}{n_1}}}=\frac{\hat\theta_N-\theta}{\hat\sigma_N^2},
\end{align*}
where the variances are estimated using the consistent estimators
\begin{align*}
\V ar(Y_2^0)=\frac{1}{n_2-1}\sum_{j=1}^{n_2}(p_j-\hat\theta_n)^2,\ \ \V ar(Y_1^0)=\frac{1}{n_1-1}\sum_{i=1}^{n_1}(q_i-(1-\hat\theta_n))^2.
\end{align*}
Evidently if we consider the statistic where variances are estimated with the coefficients $1/n$ instead of $1/(n-1)$, 
\begin{align}\label{Eq28}
Z_N^0=\frac{\hat\theta_N-\theta}{\sqrt{\frac{1}{n_2^2}\sum_{j=1}^{n_2}(p_j-\hat\theta_n)^2+\frac{1}{n_1^2}\sum_{i=1}^{n_1}(q_i-(1-\hat\theta_n))^2}},
\end{align}
the asymptotic result of Theorem \ref{T1} will still be true.
\begin{theorem}\label{T6}
Under the null hypothesis $\theta=\frac{1}{2}$, for the Fligner-Policello statistic $F_N$ and the statistic $Z_N^0$ defined in \eqref{Eq28}, the following relationship holds 
\begin{align*}
\frac{(Z_N^0)^2}{F_N^2}\geq 1.
\end{align*}
\end{theorem}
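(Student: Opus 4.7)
The plan is purely algebraic: bring $F_N^2$ and $(Z_N^0)^2$ to a common form and observe that the Fligner--Policello denominator carries one extra non-negative summand. The only mild subtlety is bookkeeping three different normalizations that appear in the statement: the $1/(n_i-1)$ hidden inside $\V ar(Y_i^0)$, the $1/n_i$ used in the denominator of \eqref{Eq28}, and the mixture of $(n_i-1)$ and $n_i^2$ factors inside the Fligner--Policello denominator in \eqref{Eq27}.

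First I would simplify the numerator of $F_N$. Since $n_1 n_2 \hat\theta_N - n_1 n_2 (1 - \hat\theta_N) = 2 n_1 n_2 (\hat\theta_N - 1/2)$, the factor of $2$ in \eqref{Eq27} cancels, and under $\theta = 1/2$ the numerators of $F_N^2$ and $(Z_N^0)^2$ differ only by the explicit factor $n_1^2 n_2^2$. Next, using the identities
\[
\sum_{j=1}^{n_2}(p_j - \hat\theta_N)^2 = (n_2 - 1)\V ar(Y_2^0), \qquad \sum_{i=1}^{n_1}(q_i - (1 - \hat\theta_N))^2 = (n_1 - 1)\V ar(Y_1^0),
\]
and multiplying numerator and denominator of $(Z_N^0)^2$ from \eqref{Eq28} by $n_1^2 n_2^2$, one arrives at
\[
(Z_N^0)^2 = \frac{n_1^2 n_2^2 (\hat\theta_N - 1/2)^2}{n_2^2 (n_1 - 1)\V ar(Y_1^0) + n_1^2 (n_2 - 1)\V ar(Y_2^0)}.
\]

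After the same rearrangement, $F_N^2$ has an identical numerator but its denominator is larger by exactly $n_1 n_2 \hat\theta_N (1 - \hat\theta_N)$, so the ratio telescopes into
\[
\frac{(Z_N^0)^2}{F_N^2} = 1 + \frac{n_1 n_2 \hat\theta_N (1 - \hat\theta_N)}{n_2^2 (n_1 - 1)\V ar(Y_1^0) + n_1^2 (n_2 - 1)\V ar(Y_2^0)} \geq 1,
\]
since $\hat\theta_N \in [0,1]$ and the sample variances are non-negative. The ``main obstacle'' is therefore really just notational bookkeeping; once every sum of squares is expressed in the $(n_i - 1)\V ar(Y_i^0)$ form and the $n_1^2, n_2^2$ factors are matched, the cancellation and the inequality are immediate. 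I note in passing that the null is used only to line up the numerators and that the inequality is strict except in the degenerate case $\hat\theta_N \in \{0, 1\}$.
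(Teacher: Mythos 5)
Your proof is correct and follows essentially the same route as the paper: after cancelling the factor of $2$ in the Fligner--Policello numerator and matching normalizations, both statistics share the numerator $n_1n_2(\hat\theta_N-\tfrac12)$ under the null, and the Fligner--Policello denominator carries the single extra non-negative term $n_1n_2\hat\theta_N(1-\hat\theta_N)$, which yields the inequality. The paper performs the identical comparison, only normalized by $1/(n_1n_2)$ instead of your $n_1^2n_2^2$ scaling, so the two arguments coincide.
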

\begin{proof}
The proof immediately follows from \eqref{Eq27}. Indeed,
\begin{align*}
F_N=\frac{\hat\theta_N-\frac{1}{2}}{\sqrt{\frac{1}{n_2^2}\sum_{j=1}^{n_2}(p_j-\hat\theta_n)^2+\frac{1}{n_1^2}\sum_{i=1}^{n_1}(q_i-(1-\hat\theta_n))^2+\frac{1}{n_1n_2}\hat\theta_N(1-\hat\theta_N)}},
\end{align*}
therefore $\frac{(Z_N^0)^2}{F_N^2}\geq 1$ if $\theta=\frac{1}{2}.$
\end{proof}
\begin{remark}\label{R6} 
The result of Theorem \ref{T6} means that the test based on the statistic $Z_N^0$ for each value of the sample size $N$ will give smaller p-value, than the p-value based on the Fligner-Policello test. Thus, the former needs less evidence to reject the null hypothesis than the Fligner-Policello test. For large values of the sample size the tests based on $Z_N,\,Z_N^0$ and the Fligner-Policello test will give similar results, since the difference in standard errors is $\frac{1}{n_1n_2}\hat\theta_N(1-\hat\theta_N),$ which tends to 0 as $n_1\rightarrow+\infty,\,n_2\rightarrow+\infty.$
\end{remark}

\subsection{Regression on ranks}\label{RR}
With respect to Section \ref{AWP}, in this section we will consider the situation when a numeric covariate is present and the group comparison of ordinal response variables should be adjusted for that covariate. \cite{Koch1982} describes one way of conducting an adjusted comparison of groups by using {\it the regression on ranks.} The observed sample is \eqref{Eq11}. The vector of response values $Y=(Y_1,Y_2)$ is replaced by the vector of ranks $$R=(R_{11},\cdots,R_{1n_1},R_{21},\cdots,R_{2n_2}).$$
In the first step a simple linear regression is fitted for the pair $(R,X),$ where $X=(x_{11},\cdots,x_{1n_1},x_{21},\cdots,x_{2n_2})$ is the vector of covariates for the combined two treatment groups. Denoting by $\bar R,\bar X$ correspondingly the means of the ranks and the covariates, the estimates of the intercept and the slope of this linear regression would be
\begin{align*}
\hat\psi_0=\bar R-\hat\psi_1\bar X,\ \ \hat\psi_1=\frac{\C ov(R,X)}{\V ar(X)}. 
\end{align*}
In the next step we replace the ranks by their residuals after fitting the linear regression described above
\begin{align*}
R_{ij}^{res}=R_{ij}-(\hat\psi_0+\hat\psi_1x_{ij}).
\end{align*}
The formula for the residuals can be simplified as follows
\begin{align}\label{Eq40}
 R_{ij}^{res}=(R_{ij}-\bar R)-(x_{ij}-\bar X)\frac{\C ov(R,X)}{\V ar(X)}.
\end{align}
Then the computations for the Wilcoxon test (see Theorem \ref{T2}) are applied to the residuals in \eqref{Eq40}, and this means that the $Z$ statistic in \eqref{Eq25} is constructed based on the residuals $R_{ij}^{res}$. On the other hand, if we calculate the sum of residual ranks of the active group, then
\begin{align*}
 \sum_{j=1}^{n_2}R_{2j}^{res}=\sum_{j=1}^{n_2}R_{2j}-n_2\bar R-\frac{n_1n_2}{N}(\bar x_{2}-\bar x_{1})\frac{\C ov(R,X)}{\V ar(X)},
\end{align*}
hence, remembering also that $\hat\theta_N-\frac{1}{2}=\frac{1}{n_1n_2}\left(\sum_{j=1}^{n_2}R_{2j}-n_2\bar R\right),$ we get
\begin{align}\label{Eq60}
\sum_{j=1}^{n_2}R_{2j}^{res}=n_1n_2\left(\hat\theta_N-\frac{1}{2}\right)-\frac{n_1n_2}{N}(\bar x_{2}-\bar x_{1})\frac{\C ov(R,X)}{\V ar(X)}.
\end{align}
It is easy to see that
\begin{align*}
\V ar(R^{res})=\V ar(R)-\frac{\C ov(R,X)^2}{\V ar(X)}.
\end{align*}
Therefore the $Z$ value based on residuals will be (remembering that the sum of residuals is always equal to zero)
\begin{align}\label{Eq41}
 Z_N^{res}=\frac{\sum_{j=1}^{n_2}R_{2j}^{res}}{\sqrt{\frac{n_1n_2}{N}\V ar(R^{res})}}=\frac{\hat\theta_N-\frac{1}{2}-\frac{1}{N}(\bar x_{2}-\bar x_{1})\frac{\C ov(R,X)}{\V ar(X)}}{\sqrt{\frac{1}{n_1n_2N}\left[\V ar(R)-\frac{\C ov(R,X)^2}{\V ar(X)}\right]}}.
\end{align}
Or, using the equation
\begin{align}\label{Eq58}
\hat\theta_N-\frac{1}{2}=\frac{1}{n_1n_2}\left(\sum_{j=1}^{n_2}R_{2j}-n_2\bar R\right)=\frac{R_{2n_2}-R_{1n_1}}{N},
\end{align}
we can rewrite \eqref{Eq41} as
\begin{align}\label{Eq59}
 Z_N^{res}=\frac{1}{N}\frac{R_{2n_2}-R_{1n_1}-(\bar x_{2}-\bar x_{1})\frac{\C ov(R,X)}{\V ar(X)}}{\sqrt{\frac{1}{n_1n_2N}\left[\V ar(R)-\frac{\C ov(R,X)^2}{\V ar(X)}\right]}}.
\end{align}
Equations \eqref{Eq41} and \eqref{Eq59} can be compared with the $Z$ value under the null hypothesis $\theta=\frac{1}{2}$ from Theorem \ref{T7}
\begin{align*}
Z_N^\beta=\frac{\hat\theta_N-\frac{1}{2}-\frac{\bar x_1- \bar x_2}{\frac{\V ar(x_{1})}{n_1}+\frac{\V ar(x_{2})}{n_2}}\left[\frac{\C ov(x_1,y_1^0)}{n_1} + \frac{\C ov(x_2,y_2^0)}{n_2}\right]}{\sqrt{\frac{\V ar(Y_2^0)}{n_2}+\frac{\V ar(Y_1^0)}{n_1}- \frac{\left[\frac{\C ov(x_1,y_1^0)}{n_1} + \frac{\C ov(x_2,y_2^0)}{n_2}\right]^2}{\frac{\V ar(x_{1})}{n_1}+\frac{\V ar(x_{2})}{n_2}}}}.
\end{align*}
\begin{remark}\label{R10} The comparison of standard errors of win proportion and the regression on the ranks is similar to the case of crude (non-adjusted) estimates in Theorem \ref{T4}; only a coefficient $\frac{1}{n_1n_2}$ is added. Formula \eqref{Eq58}, like formula \eqref{Eq23}, is another computationally less intense method to calculate the win proportion.
\end{remark}

\subsection{The Cochran-Mantel-Haenszel test}\label{CMH}

In was shown in Section \ref{CWS} that there is equivalence between the Mann-Whitney U-statistic and the Wilcoxon rank-sum statistic. Both statistics can be generalized to stratified analysis. For stratified analysis, the generalization of the Mann-Whitney test is called the {\it Cochran-Mantel-Haenszel test}, and the generalization of the Wilcoxon rank-sum test is called the {\it Van Elteren test}. 

Consider the problem of stratified analysis \eqref{Eq30}, \eqref{Eq31}. Unlike the null hypothesis \eqref{Eq32}, a stronger condition will be subject to testing, that is, the equality of distributions in each stratum,
\begin{align}\label{Eq33}
&\H_0: \ \ F_\xi(\cdot)=F_\eta(\cdot)\text{ and }  F_{\xi'}(\cdot)=F_{\eta'}(\cdot),\text{ against}\nonumber\\
&\H_1: \ \ F_\xi(\cdot)\neq F_\eta(\cdot)\text{ or }  F_{\xi'}(\cdot)\neq F_{\eta'}(\cdot).
\end{align}
As in Section \ref{WTST} we can construct the win proportion and the corresponding $Z$ statistic (see Theorem \ref{T2}) for each stratum separately. Denote by $R$ the vector of ranks of the sample $Y=(Y_1,Y_2),$ and by $R'$ the vector of ranks of the sample $Y'=(Y_1',Y_2').$ Then
\begin{align}\label{Eq36}
Z_{n_1}=\frac{W-n_{12}\bar R_{n_1}}{\sqrt{\frac{n_{11}n_{12}}{n_1}\V ar(R)}},\ \ Z_{n_2}'=\frac{W'-n_{22}\bar R_{n_2}'}{\sqrt{\frac{n_{21}n_{22}}{n_2}\V ar(R')}}.
\end{align}
The idea for construction of the stratified statistic is to choose coefficients per strata $(w_1,w_2)$ and combine the estimates for each stratum using these coefficients
\begin{align*}
W^{str}=w_1W+w_2W'. 
\end{align*}
The $Z$ statistic for the hypothesis testing can be obtained by combining the $Z$ values in \eqref{Eq36} using the same coefficients
\begin{align*}
Z_N^{Elt}=\frac{W^{str}-(w_1n_{12}\bar R_{n_1}+w_2n_{22}\bar R_{n_2}')}{\sqrt{w_1^2\frac{n_{11}n_{12}}{n_1}\V ar(R)+w_2^2\frac{n_{21}n_{22}}{n_2}\V ar(R')}}.
\end{align*}
The {\it van Elteren statistic} (see, for example, page 145, \cite{Leh1975}) for testing the hypothesis in \eqref{Eq33} uses the following coefficients 
\begin{align*}
w_1=\frac{1}{n_1+1},\ \ w_2=\frac{1}{n_2+1}. 
\end{align*}
Comparing to the van Elteren coefficients \eqref{Eq45} that were used to estimate the stratified win probability we get
\begin{align}\label{Eq39}
n_{11}n_{12}w_1=w_1^0,\ \ n_{21}n_{22}w_2=w_2^0. 
\end{align}
 
\begin{theorem}[van Elteren]\label{T9}
Under the null hypothesis \eqref{Eq33} and for the coefficients defined in \eqref{Eq39} the following convergence holds 
$$Z_N^{Elt}\Longrightarrow\N(0,1),\ \ n_{ij}\rightarrow+\infty,\,i=1,2,\,j=1,2.$$
\end{theorem}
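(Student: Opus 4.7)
The plan is to exploit the independence of observations across the two strata and reduce the statement to two marginal applications of Theorem~\ref{T2}, and then recombine the two resulting standard normal limits through a weighted-sum argument.

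First I would rewrite the numerator of $Z_N^{Elt}$ by grouping terms per stratum,
\begin{align*}
W^{str} - (w_1 n_{12}\bar R_{n_1} + w_2 n_{22}\bar R_{n_2}') = w_1(W - n_{12}\bar R_{n_1}) + w_2(W' - n_{22}\bar R_{n_2}'),
\end{align*}
and use Theorem~\ref{T2} inside each stratum to read each parenthesized difference as the stratum-specific Wilcoxon statistic $Z_{n_1}$ or $Z_{n_2}'$ from \eqref{Eq36} multiplied by its variance normalization. Setting $a_N = w_1^2 \frac{n_{11}n_{12}}{n_1}\V ar(R)$ and $b_N = w_2^2 \frac{n_{21}n_{22}}{n_2}\V ar(R')$, this yields the key identity
\begin{align*}
Z_N^{Elt} = \frac{\sqrt{a_N}}{\sqrt{a_N+b_N}}\,Z_{n_1} + \frac{\sqrt{b_N}}{\sqrt{a_N+b_N}}\,Z_{n_2}',
\end{align*}
so $Z_N^{Elt}$ is a random convex combination, in the $L^2$ sense, of the two stratum-level $Z$ statistics with coefficients whose squared sum is identically $1$.

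Next I would invoke the independence of observations across strata, which makes $Z_{n_1}$ and $Z_{n_2}'$ independent, and apply Theorem~\ref{T2} in each stratum under \eqref{Eq33}. This gives joint convergence of $(Z_{n_1},Z_{n_2}')$ to two independent standard normals. To handle the random coefficients, I would use a compactness (subsequence) argument: the pair $(\alpha_N,\beta_N)=\bigl(\sqrt{a_N/(a_N+b_N)},\sqrt{b_N/(a_N+b_N)}\bigr)$ lives on the unit circle, so along any subsequence there is a further subsequence along which $(\alpha_N,\beta_N) \to (\alpha,\beta)$ with $\alpha^2+\beta^2=1$. Along that subsequence Slutsky's theorem gives $\alpha_N Z_{n_1} + \beta_N Z_{n_2}' \Longrightarrow \alpha Z + \beta Z' \sim \N(0,1)$, and since the limit is the same for every accumulation point of the coefficients, the full sequence also converges to $\N(0,1)$.

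The main obstacle will be this last step of controlling the random coefficients $(\alpha_N,\beta_N)$. The subsequence argument is the cleanest route because it avoids having to identify the asymptotic values of $\V ar(R)/[n_1(n_1+1)/12]$ and $\V ar(R')/[n_2(n_2+1)/12]$ explicitly, which in the presence of ties depends on the limits of the empirical distributions under \eqref{Eq33}. A direct alternative would be to compute the characteristic function of $Z_N^{Elt}$ and factor it across strata using independence, then apply the marginal CLT of Theorem~\ref{T2} to each factor; either route reduces the problem to well-established one-stratum asymptotics.
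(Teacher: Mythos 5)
The paper itself does not prove Theorem~\ref{T9}; it states the result and defers to the literature (\cite{Leh1975}, \cite{Land1978}), so your proposal is being judged on its own merits rather than against an in-paper argument. Your algebraic decomposition is correct: with $a_N=w_1^2\frac{n_{11}n_{12}}{n_1}\V ar(R)$ and $b_N=w_2^2\frac{n_{21}n_{22}}{n_2}\V ar(R')$ one indeed has $Z_N^{Elt}=\alpha_N Z_{n_1}+\beta_N Z_{n_2}'$ with $\alpha_N^2+\beta_N^2=1$, the two stratum-level statistics are independent, and Theorem~\ref{T2} applies to each under \eqref{Eq33}. This is the natural reduction and is in the spirit of how the van Elteren statistic is assembled in Section~\ref{CMH}.

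The gap is in the final step. Slutsky's theorem lets you pass from $\alpha_N Z_{n_1}+\beta_N Z_{n_2}'$ to $\alpha Z+\beta Z'$ only when $(\alpha_N,\beta_N)$ converges \emph{in probability to constants}; your compactness argument on the unit circle treats $(\alpha_N,\beta_N)$ as a deterministic sequence, but these coefficients are random (they involve the rank variances $\V ar(R),\V ar(R')$, which are data-dependent when ties are possible) and are built from the very same observations that produce $Z_{n_1}$ and $Z_{n_2}'$. If the limiting coefficients were random and correlated with $(Z,Z')$, the limit of $\alpha_N Z_{n_1}+\beta_N Z_{n_2}'$ need not be standard normal even though $\alpha^2+\beta^2=1$ (e.g., coefficients that load onto whichever $Z$ is positive produce a half-normal mixture). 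So your stated motivation for the subsequence route --- avoiding any statement about the asymptotics of $\V ar(R)/[n_1(n_1+1)/12]$ --- does not quite work: you do not need the explicit limiting values, but you do need to know that $\V ar(R)/n_1^2$ and $\V ar(R')/n_2^2$ converge in probability to deterministic constants under \eqref{Eq33} (a standard Glivenko--Cantelli/V-statistic consequence of the within-stratum equality of distributions). Once that consistency is recorded, your subsequence argument can be run on the deterministic sample-size ratios and limiting constants, Slutsky applies legitimately, and the proof closes.
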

For the relationship of the van Elteren test and the Cochran-Mantel-Haenszel test see \cite{Land1978}. Here again, as in Theorem \ref{T4} we can compare the statistic from this test with the statistic based on the win probability described in Section \ref{SWP}. Using the equalities 
\begin{align*}
\hat\theta_N-\frac{1}{2}=\frac{1}{n_{11}n_{12}}(W-n_{12}\bar R_N),\ \ \hat\theta_N'-\frac{1}{2}=\frac{1}{n_{21}n_{22}}(W'-n_{22}\bar R_N'),
\end{align*}
we get the following result
\begin{theorem}\label{T11}  For the $Z_N^{Elt}$ statistic from the van Elteren test and the $Z_N^{str}$ statistic based on the stratified win probability with van Elteren weights, under the null hypothesis of equality of distributions in each stratum, we have
$$\frac{(Z_N^{str})^2}{(Z_N^{Elt})^2}=\frac{\frac{1}{n_{11}n_{12}}(w_1^0)^2\frac{\V ar(R)}{n_1}+\frac{1}{n_{21}n_{22}}(w_2^0)^2\frac{\V ar(R')}{n_2}}{\left(w_1^0\right)^2\V ar(\hat\theta_N)+\left(w_2^0\right)^2\V ar(\hat\theta_N')}.$$
Here $w_1^0,w_2^0$ are the van Elteren coefficients defined in \eqref{Eq45}.
\end{theorem}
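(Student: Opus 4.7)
The plan is to reduce both statistics to a common form by substituting the relations $\hat\theta_N - \tfrac{1}{2} = (W - n_{12}\bar R_N)/(n_{11}n_{12})$ and $\hat\theta_N' - \tfrac{1}{2} = (W' - n_{22}\bar R_N')/(n_{21}n_{22})$ into the numerator of $Z_N^{str}$, while rewriting the denominators so that the common factor $(w_1^0 + w_2^0)$ cancels in the ratio. Everything is a bookkeeping exercise once the right algebraic relation between the van Elteren coefficients $w_1, w_2$ (from Theorem \ref{T9}) and the weights $w_1^0, w_2^0$ from \eqref{Eq45} is exploited, namely $w_i = w_i^0/(n_{i1}n_{i2})$, which is exactly \eqref{Eq39}.

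First I would rewrite the van Elteren weight from \eqref{Eq44} applied to $(w_1^0,w_2^0)$ as $w^0 = w_1^0/(w_1^0+w_2^0)$ and $1-w^0 = w_2^0/(w_1^0+w_2^0)$. Under the null hypothesis of equal distributions in each stratum we have $\theta = \theta' = \tfrac{1}{2}$, hence $\theta^{str} = \tfrac{1}{2}$, and the numerator of $Z_N^{str}$ becomes
\begin{align*}
w^0(\hat\theta_N - \tfrac{1}{2}) + (1-w^0)(\hat\theta_N' - \tfrac{1}{2})
&= \frac{w_1^0}{(w_1^0+w_2^0)n_{11}n_{12}}(W - n_{12}\bar R_{n_1}) \\
&\quad + \frac{w_2^0}{(w_1^0+w_2^0)n_{21}n_{22}}(W' - n_{22}\bar R_{n_2}') \\
&= \frac{1}{w_1^0+w_2^0}\bigl[\,W^{str} - (w_1 n_{12}\bar R_{n_1} + w_2 n_{22}\bar R_{n_2}')\bigr],
\end{align*}
using \eqref{Eq39} at the last step. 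Thus the numerator of $Z_N^{str}$ is exactly $(w_1^0+w_2^0)^{-1}$ times the numerator of $Z_N^{Elt}$.

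Next I would tackle the denominators. From \eqref{Eq43} with weights $w^0$ and $1-w^0$,
\begin{align*}
\V ar(\hat\theta_N^{str}) = \frac{(w_1^0)^2\,\V ar(\hat\theta_N) + (w_2^0)^2\,\V ar(\hat\theta_N')}{(w_1^0+w_2^0)^2}.
\end{align*}
For $Z_N^{Elt}$, the squared denominator $w_1^2(n_{11}n_{12}/n_1)\V ar(R) + w_2^2(n_{21}n_{22}/n_2)\V ar(R')$ simplifies via \eqref{Eq39} to
\begin{align*}
\frac{(w_1^0)^2}{n_{11}n_{12}}\cdot\frac{\V ar(R)}{n_1} + \frac{(w_2^0)^2}{n_{21}n_{22}}\cdot\frac{\V ar(R')}{n_2}.
\end{align*}

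Finally I would form the ratio $(Z_N^{str})^2/(Z_N^{Elt})^2$. Squaring the numerator relation yields a factor $(w_1^0+w_2^0)^{-2}$ on top, and the same factor appears in $\V ar(\hat\theta_N^{str})$; these cancel in the ratio, leaving precisely the expression claimed in the theorem. There is no real obstacle here — the argument is purely algebraic — the only place to be careful is keeping straight the two kinds of weights (the van Elteren coefficients $w_1, w_2$ used in the $W^{str}$ combination versus the proportion-scale weights $w_1^0, w_2^0$ used in $\hat\theta_N^{str}$) and consistently applying \eqref{Eq39} to convert between them.
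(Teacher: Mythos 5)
Your proposal is correct and follows essentially the same route the paper intends: the paper presents Theorem \ref{T11} as an immediate consequence of the equalities $\hat\theta_N-\tfrac{1}{2}=\tfrac{1}{n_{11}n_{12}}(W-n_{12}\bar R_N)$ and $\hat\theta_N'-\tfrac{1}{2}=\tfrac{1}{n_{21}n_{22}}(W'-n_{22}\bar R_N')$ together with \eqref{Eq39}, which is exactly the bookkeeping you carry out. Your explicit tracking of the common factor $(w_1^0+w_2^0)$ in the numerator and in $\V ar(\hat\theta_N^{str})$, and its cancellation in the ratio, fills in the algebra the paper leaves implicit.
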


\begin{remark}\label{R17} The comparison of $Z$ statistics follows the same pattern as in Remark \ref{R10}, that is coefficients $\frac{1}{n_{11}n_{12}}$ and $\frac{1}{n_{21}n_{22}}$ are added per stratum.
\end{remark}

\subsection{The rank ANCOVA}\label{RANK}

In this section we will compare the statistical hypothesis test based on the adjusted and stratified win probability described in Section \ref{AWPS} with the rank ANCOVA test. The rank ANCOVA was proposed in \cite{Quade1967} (see also {\it Section 7.6, Rank Analysis of Covariance} in \cite{Stokes2012}, as well as \cite{Koch1982},\cite{Koch1990}). The test based on the rank ANCOVA approach is very similar to the test obtained from the regression on ranks in Section \ref{RR}. Here again, as the first step, a simple regression line will be fitted for the ranks, but in the second step, instead of the Wilcoxon test, the van Elteren test is performed, to account for the stratification as well.

The hypothesis to test is \eqref{Eq53} while observing the samples \eqref{Eq47}. Introducing the ranks $R,R'$ correspondingly for the combined samples $Y=(Y_1,Y_2)$ and $Y'=(Y_1',Y_2')$. As in Section \ref{RR} we fit to the ranks $R,R'$ their respective (combined across treatment groups) covariate vectors $X,X'$ and derive (see \eqref{Eq40}) the residuals $R^{res},(R^{res})'$ per each stratum. The van Elteren test, described in the Section \ref{CMH}, can be applied to the residuals to test for treatment effect difference across strata. Remembering that the sum of all residuals in a stratum is zero, the van Elteren statistic will be
\begin{align}\label{Eq54}
Z_N^{A}=\frac{w_1W^{res}+w_2(W^{res})'}{\sqrt{w_1^2\frac{n_{11}n_{12}}{n_1}\V ar(R^{res})+w_2^2\frac{n_{21}n_{22}}{n_2}\V ar((R^{res})')}}.
\end{align}
Here $w_1=\frac{1}{n_1+1}$ and $w_2=\frac{1}{n_2+1}$. The test based on this $Z_N^{A}$ value will be called rank ANCOVA test. The following result is from \cite{Quade1967}
\begin{theorem}\label{T13} The following asymptotic result holds
$$Z_N^{A}\Longrightarrow\N(0,1),\ \ n_{ij}\rightarrow+\infty,\,i=1,2,\,j=1,2.$$
\end{theorem}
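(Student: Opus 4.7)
The plan is to reduce Theorem \ref{T13} to Theorem \ref{T12}, exploiting the per-stratum equivalence between the regression-on-ranks residual sum and the covariate-adjusted win proportion that was already developed in Section \ref{RR} for the single-stratum case. Concretely, I would show that the numerator of $Z_N^{A}$ in \eqref{Eq54}, after division by appropriate sample-size normalizers, coincides up to asymptotically negligible perturbations with the numerator $\hat\beta_N^{str}-\beta^{str}$ from Theorem \ref{T12}, and that the variance factor in the denominator of $Z_N^{A}$ is proportional (in probability, with a constant that exactly compensates the rescaling) to $\V ar(\hat\beta_N^{str})$. Asymptotic normality of $Z_N^{A}$ then follows from Theorem \ref{T12} by Slutsky's theorem.

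The first step is the per-stratum reformulation. For stratum I, applying \eqref{Eq60} to the ranks and covariates gives
\begin{align*}
W^{res}=n_{11}n_{12}\!\left(\hat\theta_N-\tfrac{1}{2}\right)-\frac{n_{11}n_{12}}{n_1}(\bar x_2-\bar x_1)\,\frac{\C ov(R,X)}{\V ar(X)},
\end{align*}
and analogously for stratum II with primed quantities. Substituting these into the van Elteren combination in \eqref{Eq54} with weights $w_1=1/(n_1+1)$, $w_2=1/(n_2+1)$ yields a numerator proportional to the stratified covariate-adjusted win proportion shift $\hat\theta_N^{str}-\bar x_N^{str}\,\C ov(\bar x_N^{str},\hat\theta_N^{str})/\V ar(\bar x_N^{str})$ appearing in \eqref{Eq50}, because the van Elteren coefficients \eqref{Eq39} link naturally to the weights \eqref{Eq44}--\eqref{Eq45} used in Section \ref{AWPS}. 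Under the null hypothesis \eqref{Eq53}, centering by $\beta^{str}=1/2$ is the same as centering $\hat\theta_N^{str}$ by $1/2$ after the adjustment term, which is exactly the object whose asymptotic normality is asserted in Theorem \ref{T12}.

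The second step is the denominator analysis. The rank residual variance $\V ar(R^{res})=\V ar(R)-\C ov(R,X)^2/\V ar(X)$ must be shown, after multiplication by the factor $w_1^2 n_{11}n_{12}/n_1$ (and its stratum-II analogue), to be asymptotically equivalent to $w_1^{0,2}\V ar(\hat\beta_N)$ from Theorem \ref{T7} specialized to the stratum, with corresponding $w_2^0$ contribution. Under the null, the rank means and rank-versus-covariate covariances are simple linear functions of the individual win proportions $p_j,q_i$ and the covariate values, so $\C ov(R,X)/N$ and $\V ar(R)/N^2$ converge in probability to the same limits that appear in $\V ar(\hat\theta_N)$, $\C ov(\bar x_N,\hat\theta_N)$ and $\V ar(\bar x_N)$, up to constants absorbed by the $w_i/w_i^0$ ratio. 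This is the analogue, in the adjusted-and-stratified setting, of the identity \eqref{Eq23} and Theorem \ref{T4}.

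The main obstacle is bookkeeping the scaling constants that relate the rank-based objects $(\V ar(R),\C ov(R,X))$ to the win-proportion-based objects $(\V ar(Y^0),\C ov(X,Y^0))$ simultaneously in numerator and denominator, and then invoking Slutsky's theorem to pass from convergence in probability of these rescaled factors to convergence in distribution of the whole ratio. Once the algebraic identification $Z_N^{A}=Z_N^{AdS}+o_P(1)$ (or equality up to a factor tending to one in probability) is established under the null, Theorem \ref{T12} delivers the desired $\mathcal{N}(0,1)$ limit.
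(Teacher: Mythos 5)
The paper does not actually prove Theorem \ref{T13}: it is taken from \cite{Quade1967} as a known result, so any self-contained argument is necessarily a different route. Your strategy --- rewrite $Z_N^A$ via \eqref{Eq60} into the form \eqref{Eq56}, compare with $Z_N^{AdS}$ in \eqref{Eq61}, and conclude by Slutsky from Theorem \ref{T12} --- is the natural one given the algebra the paper itself develops in Section \ref{RANK}. But the paper deliberately stops short of claiming $Z_N^A=Z_N^{AdS}+o_P(1)$, and Remark \ref{R13} explains why: the two statistics differ both in the adjustment coefficient (combined-sample regression slope $\C ov(R,X)/\V ar(X)$ versus the pooled within-group ratio $\C ov(\bar x_N^{str},\hat\theta_N^{str})/\V ar(\bar x_N^{str})$) and in the variance estimator (combined rank variance $\V ar(R^{res})$ versus pooled within-group variances of the individual win proportions). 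Your proposal asserts that these agree asymptotically "up to constants absorbed by the $w_i/w_i^0$ ratio" but does not prove it, and this is exactly where the work lies.

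The concrete gap is in the denominator step. By Theorem \ref{T4} and Remark \ref{R9}, even in the crude unstratified case the ratio of the two variance estimators is not identically $1$; it converges to $1$ only under the \emph{strong} null of identical distributions in each stratum (cf.\ Remark \ref{R4}: the combined-sample rank variance is a valid standard error only when the two groups have the same distribution). Under the weaker null \eqref{Eq53} ($\beta=\beta'=\tfrac12$ with possibly unequal distributions --- a Behrens--Fisher situation), $\V ar(R^{res})$ is generally \emph{not} consistent for the variance of the numerator of $Z_N^A$, so the identification $Z_N^A=Z_N^{AdS}+o_P(1)$ fails and your reduction does not go through. To repair the argument you must (i) state the null as equality of distributions per stratum together with randomization of the covariate, (ii) show that under this null both slope estimators converge to the same limit so that the difference of the adjustment terms is $o_P(n^{-1/2})$ (it multiplies $\bar x_2-\bar x_1=O_P(n^{-1/2})$), and (iii) show that $\V ar(R)/\bigl(n_\ell\, n_{\ell 1}n_{\ell 2}\bigr)$ and the pooled $\V ar(\hat\theta)$ terms, and likewise the two cross-covariance terms, have ratio tending to $1$ in probability in each stratum. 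None of these steps is carried out in the proposal; without them the claim "$Z_N^A=Z_N^{AdS}+o_P(1)$" is an assertion, not a proof.
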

Here we can draw parallels between the rank ANCOVA test and the test based on the adjusted win probability with stratification described in Section \ref{AWPS}. For each stratum separately we can write the formula \eqref{Eq60}
\begin{align*}
&W^{res}=\sum_{j=1}^{n_{12}}R_{2j}^{res}=n_{11}n_{12}\left(\hat\theta_{n_1}-\frac{1}{2}\right)-\frac{n_{11}n_{12}}{n_1}(\bar x_{2}-\bar x_{1})\frac{\C ov(R,X)}{\V ar(X)},\\
&(W^{res})'=\sum_{j=1}^{n_{22}}(R_{2j}^{res})'=n_{21}n_{22}\left(\hat\theta_{n_2}'-\frac{1}{2}\right)-\frac{n_{21}n_{22}}{n_2}(\bar x_{2}'-\bar x_{1}')\frac{\C ov(R',X')}{\V ar(X')}.
\end{align*}
Using the notation \eqref{Eq39} we set
\begin{align*}
&w_1W^{res}=w_1^0\left(\hat\theta_{n_1}-\frac{1}{2}\right)-\frac{w_1^0}{n_1}(\bar x_{2}-\bar x_{1})\frac{\C ov(R,X)}{\V ar(X)},\\
&w_2(W^{res})'=w_2^0\left(\hat\theta_{n_2}'-\frac{1}{2}\right)-\frac{w_2^0}{n_2}(\bar x_{2}'-\bar x_{1}')\frac{\C ov(R',X')}{\V ar(X')}.
\end{align*}
Hence, dividing also the numerator and the denomination by $(w_1^0+w_2^0)$ and remembering that $w^0=\frac{w_1^0}{w_1^0+w_2^0}$ is the van Elteren weight, we can simplify \eqref{Eq54} as follows
\begin{align}\label{Eq56}
 Z_N^{A}=\frac{\hat\theta_N^{str}-\frac{1}{2}-\left[\frac{w^0}{n_1}(\bar x_{2}-\bar x_{1})\frac{\C ov(R,X)}{\V ar(X)}+\frac{(1-w^0)}{n_2}(\bar x_{2}'-\bar x_{1}')\frac{\C ov(R',X')}{\V ar(X')}\right]}{\sqrt{\frac{(w^0)^2}{n_1}\left[\V ar(R)-\frac{\C ov(R,X)^2}{\V ar(X)}\right]+\frac{(1-w^0)^2}{n_2}\left[\V ar(R')-\frac{\C ov(R',X')^2}{\V ar(X')}\right]}},
\end{align}
here $\hat\theta_N^{str}=w^0\hat\theta_{n_1}+(1-w^0)\hat\theta_{n_2}'.$ On the other hand, under the null hypothesis of adjusted win probability with stratification being $\beta=\frac{1}{2},$ we derive from Theorem \ref{T12} (see formulas \eqref{Eq48}-\eqref{Eq52})
\begin{align}\label{Eq61}
Z_N^{AdS}=\frac{\hat\theta_N^{str} - \frac{1}{2}-\frac{\bar x_N^{str}}{\V ar(\bar x_{N}^{str})}\C ov(\bar x_{N}^{str},\hat\theta_N^{str})}{\sqrt{\V ar(\hat\theta_N^{str})-\frac{[\C ov(\bar x_{N}^{str},\,\hat\theta_N^{str})]^2}{ \V ar(\bar x_{N}^{str})}}}.
\end{align}

\begin{remark}\label{R13} To compare the $Z$ values from \eqref{Eq56} and \eqref{Eq61}, we see the following differences. First, the rank ANCOVA approach uses combined estimates of variances across treatment groups in each stratum, while the test based on the adjusted win probability with stratification uses pooled estimates of variances. Second, the rank ANCOVA approach performs adjustment by the numeric covariate first, then combines estimates across strata, whereas in the win probability approach the non-adjusted win proportion, the covariates are combined across strata then the adjustment is performed. Overall, both methods provide similar statistical tests, while the win probability approach provides also treatment effect measure with its confidence interval which corresponds to the mentioned statistical test. 
\end{remark}

\section{Applications to a clinical trial data}\label{DAPA}

In this section we will apply the win ratio and rank ANCOVA methodology described in previous sections to the DAPA-HF trial data. Specifically, the data from a PRO (patient reported outcome) questionnaire, which measures heart failure (HF) related symptoms, will be used.

\subsection{Kansas City Cardiomyopathy Questionnaire (KCCQ)}

The KCCQ is a self-administered disease specific instrument for patients with HF (see \cite{Green2000}, \cite{Spertus2005}). The KCCQ consists of 23 questions measuring, from the patients’ perspectives, their HF-related symptoms, physical limitations, social limitations, self-efficacy, and health-related quality of life over the prior 2 weeks. All items are measured on a verbal response scale with 5–7 response options. There are five individual subscales, and all, except the symptom stability question and self-efficacy subscale, are aggregated into a clinical summary score (CSS) (average of the ‘physical limitation score’ and ‘total symptom score’) and overall summary score (OSS) (average of the ‘physical limitation score’, ‘total symptom score’, ‘quality of life score’ and ‘social limitation score’) (Figure \ref{P_KCCQ}). 
\begin{figure}[H]\caption{Mapping of KCCQ items and scores to conceptual domains and summary scores}
\includegraphics[width=10cm]{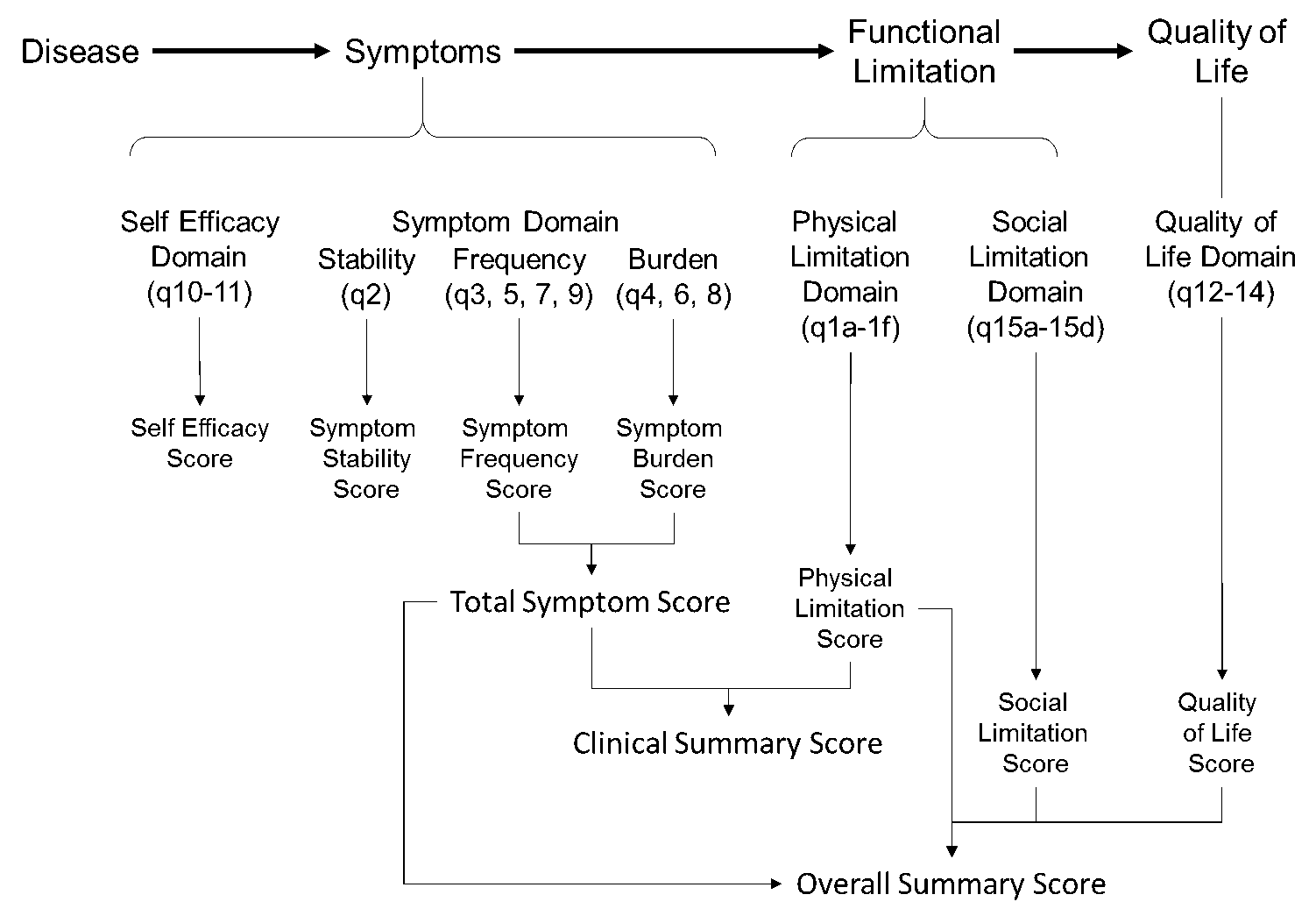}\label{P_KCCQ}
\end{figure}
Scores for each subscale are standardized to range from 0 to 100 with higher scores indicating a better outcome. The KCCQ is scored by assigning each response an ordinal value, beginning with 1 for the response that implies the lowest level of functioning, and summing items within each domain. Scale scores are transformed to a 0 to 100 range by subtracting the lowest possible scale score, dividing by the range of the scale and multiplying by 100. For the analysis we will use only the KCCQ-TSS (Total Symptoms Score) which is the average of the symptoms frequency score and the symptoms burden score (see Figure \ref{P_KCCQ}.)

\subsection{DAPA-HF trial design}
DAPA-HF was an international, multicentre, parallel group, event-driven, randomized, double-blind, clinical trial in patients with chronic HFrEF (heart failure with reduced ejection fraction), evaluating the effect of dapagliflozin 10mg, compared with placebo, given once daily, in addition to standard of care, on the risk of worsening heart failure and cardiovascular death (registration number NCT03036124 in ClinicalTrials.gov). The total number of subjects was $N=4744,$ of which $n_2=2373$ were randomized to the dapagliflozin group, $n_1=2371$ to the placebo group. In the hierarchical testing procedure for endpoints under strong type I error control, the third  secondary endpoint was the change from baseline measured at 8 months in the KCCQ-TSS. The KCCQ-TSS was assessed at baseline (randomization) and at 4 and 8 months after randomization and was analyzed as a composite, ordinal variable, incorporating the vital status of subjects at 8 months along with a change in score from baseline to 8 months in surviving subjects, while missingness for reasons other than death was imputed using the multiple imputation method under the Missing At Random assumption. The treatment effect was estimated using the win ratio appraoch. The analysis yielded a win ratio of $1.18\ \ (1.11,1.26),\ \ p<0.0001$ (see \cite{JMM2019}.) The win ratio was calculated using the adjusted win probability approach with stratification, as described in Section \ref{AWPS}. The statistical test of the null hypothesis, on the other hand, was performed using the rank ANCOVA approach described in Section \ref{RANK}. In the subsequent sections we give more details on methods for calculating the win ratio and the statistic used for the hypothesis testing.

\subsection{Complete case analysis}
In total $N=3891$ subjects had the KCCQ change from baseline at month 8 measured, i.e. both baseline value and value at month 8 was available ($n_1=1965$ in the placebo group and $n_2=1926$ in the dapagliflozin group). Table \ref{Tab1} gives the details of the mean difference analysis of KCCQ-TSS scores between the two groups.

\begin{table}[h!]\centering
\caption{Change from baseline at month 8 in KCCQ-TSS, unadjusted complete case analysis without stratification}
\ra{1.3}
\begin{tabular}{@{}cccccccccccc@{}}\toprule
& \multicolumn{3}{c}{Treatment} & \multicolumn{3}{c}{Comparison}\\
\cmidrule{2-4} \cmidrule{6-7} 
&& Dapa  &  Placebo  &  \\ 
&& $n_2=1926$  &  $n_1=1965$  &   \\ \midrule
mean (sd) && 6.1, (18.6)  & 3.3, (19.2)&&\\
mean diff (CI), t-test && &&  2.8 (1.6,\,4), $p<0.0001$ &\\
WR (CI), non-parametric test && &&  1.21 (1.13,\,1.3), $p<0.0001$ &\\
parametric WR && &&  1.18  &\\
\bottomrule
\end{tabular}
\label{Tab1}
\end{table}
Figure \ref{P5} shows the histograms of the change from baseline at month 8 in KCCQ-TSS score. They indicate that it is reasonable to assume normality of the underlying distributions in both treatment groups for the change from baseline of the KCCQ score. Therefore, a t-test can be performed to compare the treatment effect in two groups.
\begin{figure}[H]\caption{Histograms of change from baseline at month 8 in KCCQ-TSS by treatment group}
\includegraphics[width=12cm]{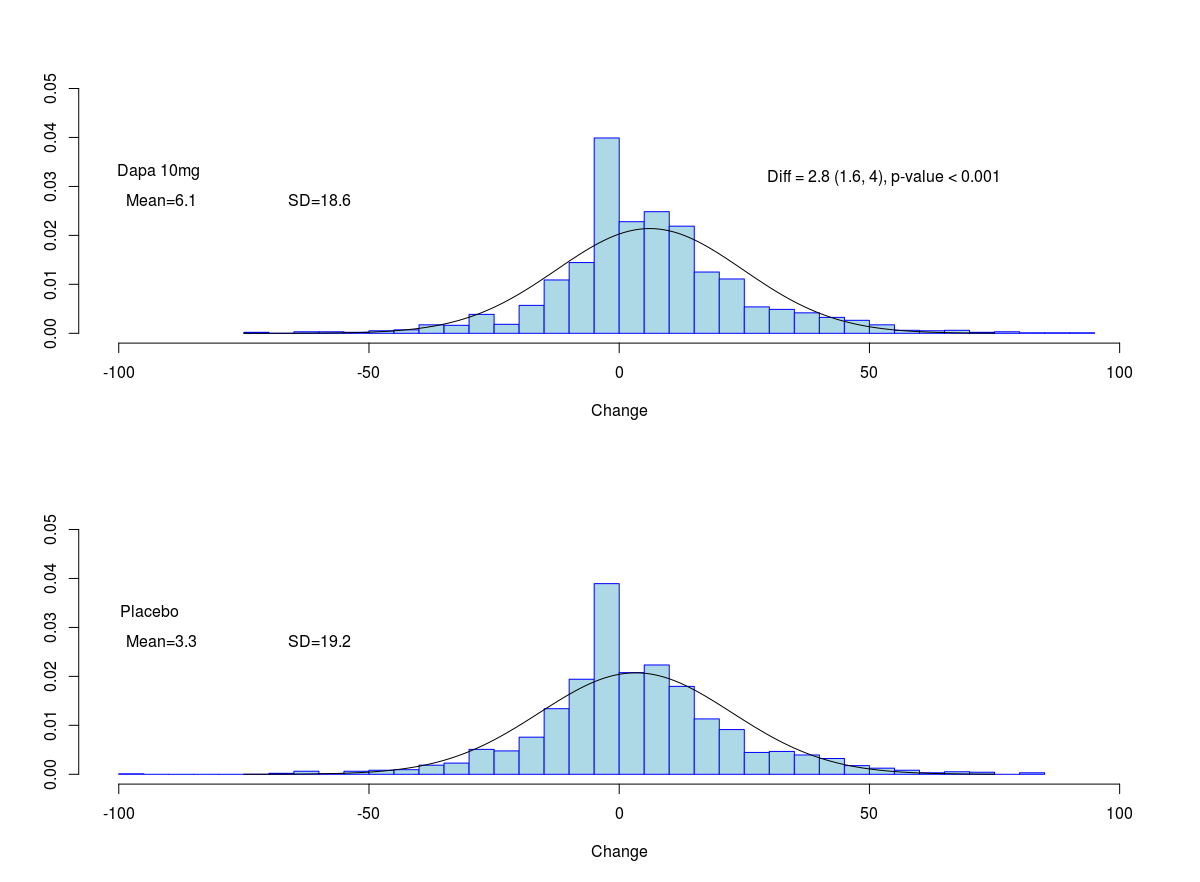}\label{P5}
\end{figure}
The estimates presented in Table \ref{Tab1} are for the win ratio \eqref{Eq12} and the p-value corresponds to the hypothesis test \eqref{Eq13}. Assuming normal distributions and estimating the mean and standard deviation of these distributions from the data (see Table \ref{Tab1}), we get $\xi\sim\N(3.3,19.2^2)$ and $\eta\sim\N(6.1,18.6^2)$. Here again $\eta$ denotes the random variable describing KCCQ-TSS in the active (dapagliflozin) group, whereas $\xi$ denotes the KCCQ-TSS in the placebo group. This means that we can calculate the parametric estimate (based on the form of the distribution) of the win ratio using the formula \eqref{Eq10} for the win probability. A non-parametric estimate for the win ratio can be constructed using formula \eqref{Eq5}, while the confidence interval and the p-value of the null hypothesis that the win ratio equals to 1 can be calculated using Theorem \ref{T1}. The results are summarized in Table \ref{Tab1}.

Below we provide the two-way ANCOVA analysis for the change from baseline at month 8 in KCCQ-TSS adjusted for the baseline KCCQ-TSS score and including the type 2 diabetes status at baseline as a stratification factor (see \cite{Kosi2019}). The least-square means and the corresponding test of their equality are summarized in Table \ref{Tab2}.

\begin{table}[H]\centering
\caption{Change from baseline at month 8 in KCCQ-TSS, adjusted complete case analysis with stratification}
\ra{1.3}
\begin{tabular}{@{}cccccccccccc@{}}\toprule
& \multicolumn{3}{c}{Treatment} & \multicolumn{3}{c}{Comparison}\\
\cmidrule{2-4} \cmidrule{6-7} 
&& Dapa  &  Placebo  &  \\ 
&& $n_2=1926$  &  $n_1=1965$  &   \\ \midrule
lsmean (stderr) && 6.0, (0.37)  & 3.4, (0.37)&&\\
lsmean diff (CI), t-test && &&  2.8 (1.9,\,3.6), $p<0.0001$ &\\
WR (CI), adj-str test && &&  1.2 (1.12,\,1.28), $p<0.0001$ &\\
\bottomrule
\end{tabular}
\label{Tab2}
\end{table}

\subsection{Handling missingness for reasons other than death}
Missing data was categorized into two categories; due to death and not due to death. Any KCCQ-TSS value which was not missing due to death was imputed using multiple imputation (see \cite{Rubin2004}) under the assumption that it was Missing At Random (MAR). The imputation was done sequentially, as described below.

First, all occurrences of missing data where there were observations made after the missing data point (non-monotone missingness), were replaced in multiple imputation datasets, using Markov Chain Monte Carlo with separate chains per subject. In this imputation model, only randomization stratum, treatment arm and observed KCCQ values were included.

For missing data where there were no observations made after the missing data point (monotone missingness), a predictive mean matching imputation approach was applied using the posterior predictive distribution. A linear regression model was estimated, based on observed data and including randomization stratification factor, treatment arm, previously observed values and number of preceding heart failure hospitalizations, as predictors. New predictions were obtained for the observed data points, using the estimated regression coefficients. The same linear model was then used to find the posterior predictive distribution of the regression coefficients. From this distribution, new regression coefficients were randomly drawn and, using the newly drawn regression coefficients, predicted values were obtained for the missing data points. For each missing data point, its five closest neighbors were identified among the predicted values of the originally observed data points. An imputation value was randomly selected from these five values.
This was done sequentially, starting at the earliest time point and progressing until the last time point had been imputed for all subjects with missing data, including imputations made along the way. This procedure is done multiple times, creating multiple imputation datasets. The analysis results on imputed data are then pooled across imputation datasets, taking into account both within-dataset variation and between-dataset variation.

\begin{remark}\label{R15} A simple method of handling missingness for reasons other than death would be to apply a sequential monotone method beginning with imputing missing data for a first post-baseline visit and subsequently proceeding to impute a second post-baseline visit from preceding observed or imputed values. Alternative methods to multiple imputation method not requiring the missing at random assumption are described in \cite{Fan2016}. One other possibility (implemented in \cite{Kaw2015}) is to manage missing values as being tied with all other values, and this way of proceeding is applicable to both deaths and missing values for other reasons; and its invocation enables assessment of treatment comparisons in an environment which is reasonably neutral with respect to deaths as well as other missing values and also does not involve a missing at random assumption.
\end{remark}

\subsection{Incorporating death}
In Section \ref{WRA} we discussed three strategies of incorporating deaths into the analysis of symptom scores. Here we will deal with only the first strategy, that is, subjects having experienced death before the assessment date of the symptom score will be assigned the same worst (lowest) ordinal value. The missing values not due to death were considered as missing at random and were imputed using the multiple imputation method. Hence, the number of non-missing change from baseline values at month 8 was 4744 minus the number of deaths prior to that time point. Overall 257 deaths happened prior to month 8 (121 in the dapagliflozin group and 136 in the placebo group), hence making the number of available changes from baseline (including the imputed values of KCCQ-TSS) $2235$ in the placebo group and $2252$ in the dapagliflozin group. Table \ref{Tab3} summarizes the results for the adjusted win ratio estimation with stratification (see Section \ref{AWPS}) and the test based on the rank ANCOVA (see Section \ref{RANK}).

\begin{table}[H]\centering
\caption{Adjusted analysis of the composite of KCCQ-TSS and death}
\ra{1.3}
\begin{tabular}{@{}cccccccccccc@{}}\toprule
& \multicolumn{3}{c}{Treatment} & \multicolumn{3}{c}{Comparison}\\
\cmidrule{2-4} \cmidrule{6-7} 
&& Dapa  &  Placebo  &  \\ 
&& $n_2=2373$  &  $n_1=2371$  &   \\ \midrule
WP && 0.54  & 0.46&&\\
WR (CI) && &&  1.18 (1.11,\,1.26)&\\
Adj-str WR test && && $p<0.0001$ &\\
rank ANCOVA && && $p<0.0001$ &\\
\bottomrule
\end{tabular}
\label{Tab3}
\end{table}

\subsection{Discussions}
In Section \ref{WRA} we described several strategies to handle intercurrent events in the analysis of change from baseline in the symptoms scores.  In the win ratio analysis in the DAPA-HF study, all intercurrent events, except deaths, were handled using the treatment policy strategy, i.e. these events are disregarded and subjects were followed as if the events had not occurred. As is described in \cite{ICH2019}, the treatment policy strategy cannot be implemented for intercurrent events that are terminal events, since values for the variable after the intercurrent event do not exist. Hence the composite strategy was used to handle deaths. If we were to combine all HFH intercurrent events in an endpoint, that would have more impact on the estimand, and instead of estimating the effect of the treatment on the change in symptoms score we would be estimating a ``net clinical benefit". The same would be true if we were to incorporate death into the composite endpoint using scenario 3 of Section \ref{WRA}, which uses a comparison of deaths based on a characteristic directly related to the event of death, for example the timing of death. Therefore only scenarios 1 and 2 were considered in the DAPA-HF study. By defining an order between deaths based on a characteristic not directly related to the event of death, a separation of the effect of the treatment on the death and on the symptoms scores was done, so the effect on symptoms scores was not driven by the effect of the treatment on risk of death.

The results presented in Tables \ref{Tab1}, \ref{Tab2} and \ref{Tab3} show that the estimated treatment effect on the symptoms score is robust, that is, it is not contingent upon the choice of statistical methods and assumptions. The estimation of the adjusted win ratio with stratification does not have any distributional assumptions, and the corresponding statistical test is similar to the rank ANCOVA test. In the complete case analysis for the change from baseline, the mean difference is an appropriate method to describe the difference in distributions, since Figure \ref{P5} demonstrates the normality (hence symmetry) of underlying distributions. Because of normality of underlying distributions, the estimated non-parametric win ratio and the parametric win ratio in Table \ref{Tab1} are almost the same. In a setting where the underlying distributions are not normal, the non-parametric win ratio estimate will still be valid. Moreover, combining the numerical changes from baseline with death as the worst possible change transforms the variables of interest into ordinal variables, and the win ratio is, again, an appropriate method to test the difference in distributions. The win ratio in Table \ref{Tab2} which is a complete case analysis and the win ratio in Table \ref{Tab3} which is an imputation based analysis (with multiple imputation of missing data not due to death and including the events of death), show that although adding deaths into the analysis yields a more complete estimate of the treatment effect, in a more realistic setting where subjects can die, the magnitude of treatment effect is the same, which confirms that the effect in KCCQ-TSS scores is not driven by the treatment effect on reducing mortality, which was observed in the trial, as shown in \cite{JMM2019}. The method of handling deaths in Table \ref{Tab3} corresponds to scenario 1, that is, all deaths were treated equal. This analysis was the sensitivity analysis in the DAPA-HF trial, whereas the primary method of analysis was based on the scenario 2, where an order between deaths was introduced based on the last change from baseline of the subject while alive. That method yields the same estimate and the confidence interval, confirming that, in this case, the handling of deaths does not change the treatment effect estimate of KCCQ-TSS scores. This was due to the fact that although there was an effect in all-cause mortality for the full duration of the study, at month 8 the number of deaths was small and was balanced in both treatment groups. In case where the difference in number of deaths in both treatment groups is clearly different at the timepoint of measurement, the handling of death can have more influence on the estimated treatment effect.

One potential drawback of the win ratio (and the associated win probability) is that the statistical interpretation of this measure is not as straightforward as, say, a difference in means. Both measures are essentially group-level estimates; making inference regarding the similarity of two independent groups, with respect to the location of their distributions. When looking at the difference in means, this is interpreted as the average difference in locations, assuming that the underlying variable is continuous and normal, and that there are no intercurrent events which influence the estimand (what is being estimated) if they are ignored. Average change in each group of symptoms scores is easy to interpret since it has the same unit of measurement as the symptoms score itself, meaning that participants of each group would expect to have approximately the same change in symptoms as the average of their group. If the underlying data is not continuous, or there are major departures from normality, a non-parametric approach is more appropriate. The non-parametric win ratio can also easily incorporate a composite strategy for handling intercurrent events (e.g. deaths). The interpretation of the win ratio is that it is the average odds of the win probability, i.e. the chance of one group having a ``greater benefit" compared to the other. The estimated number of subjects who need to be treated to observe such a benefit, can be calculated and expressed using the win probability. This would amount to deriving a Number Needed to Treat (NNT) based on the win ratio.

To better understand the treatment effect that corresponds to the win ratio of 1.18 (or, equivalently, the win proportion 0.54, see Table \ref{Tab3}), we can calculate the estimated NNT as defined in Remark \ref{R16}. The win proportion equal to 0.54 will yield an $\text{NNT}=\frac{1}{2*0.54-1}=13$ (to get an integer in calculating the NNT we always round up fractional numbers), which can be interpreted as 13 subjects need to be treated by dapagliflozin in addition to standard of care, compared to being treated with standard of care alone, to have one subject with better benefit in symptoms. It is important to note that the win ratio is calculated based on the change from baseline. The least-squares mean of the change from baseline in the placebo group is 3.4 (see Table \ref{Tab2}), which shows that in the placebo group, as well as in the dapaliflozin group, there is an improvement in symptoms. Hence, to be precise, 13 subjects need to be treated by dapagliflozin to have one subject with better improvement in symptoms than they would have if they were treated only by standard of care.

\begin{table}[H]\centering
\caption{NNT based on the win probability}
\ra{1.3}
\begin{tabular}{@{}cccccccccccc@{}}\toprule
& \multicolumn{3}{c}{Comparison} & \multicolumn{3}{c}{NNT}\\
\cmidrule{2-4} \cmidrule{6-7} 
&& Win ratio  &  Win prob  &&  \\  \midrule
&& 1.05&	0.5121951&& 	41\\
&& 1.1&	0.5238095&& 	21\\
&& 1.15&	0.5348837&& 	15\\
&& 1.18&	0.5412844&& 	13\\
&& 1.2&	0.5454545&& 	11\\
&& 1.25&	0.5555556&& 	9\\
&& 1.3&	0.5652174&& 	8\\
&& 1.35&	0.5744681&& 	7\\
&& 1.4&	0.5833333&& 	6\\
&& 1.45&	0.5918367&& 	6\\
&& 1.5&	0.6&& 	5\\
&& 2&	0.6666667&& 	4\\
&& 3&	0.75&&	2\\
&& -&	1&&	1\\
\bottomrule
\end{tabular}
\label{Tab4}
\end{table}

\section{Conclusions}\label{conclusions} The win ratio is a general method of comparing locations of distributions of two independent, ordinal random variables. Under minimal assumptions, an asymptotically normal estimator for the win ratio can be derived. Stratification and numeric covariate adjustment can be made using simple modifications of the estimator for the crude win ratio. It was shown that the win ratio and its modifications for stratified analysis, adjusted analysis and adjusted analysis with stratification give tests that, under the null hypothesis, are correspondingly equivalent to the Wilcoxon rank-sum test or to the Fligner-Policello test, linear regression testing on the ranks, van Elteren or Cochran-Mantel-Haenszel test and the rank ANCOVA test, with the advantage that the win ratio itself provides an interpretable treatment effect measure. Hence, the unified method of the win ratio described in the present work can be used instead of several different tests both for treatment effect estimation and for the corresponding treatment effect difference testing.

\addcontentsline{toc}{section}{References} 

\end{document}